\begin{document} 
\begin{titlepage}
\begin{center}
\phantom{ }
\vspace{3cm}

{\bf \Large{A Tale of Two Hungarians: Tridiagonalizing Random Matrices}}
\vskip 0.5cm
Vijay Balasubramanian${}^{\dagger}$, Javier M. Magan${}^{\ddagger}$, Qingyue Wu${}^{*}$
\vskip 0.05in
\small{${}^{\dagger}$ ${}^{\ddagger}$ ${}^{*}$ \textit{David Rittenhouse Laboratory, University of Pennsylvania}}
\vskip -.4cm
\small{\textit{ 209 S.33rd Street, Philadelphia, PA 19104, USA}}
\vskip -.10cm
\small{  ${}^{\dagger}$ \textit{Santa Fe Institute,}}
\vskip -.4cm
\small{\textit{ 1399 Hyde Park Road, Santa Fe, NM 87501, USA}}
\vskip -.10cm
\small{ ${}^{\dagger}$ ${}^{\ddagger}$ \textit{Theoretische Natuurkunde, Vrije Universiteit Brussel}}
\vskip -.4cm
\small{\textit{Pleinlaan 2,  B-1050 Brussels, Belgium}}
\vskip -.10cm
\small{  ${}^{\ddagger}$ \textit{Instituto Balseiro, Centro At\'omico Bariloche}}
\vskip -.4cm
\small{\textit{ 8400-S.C. de Bariloche, R\'io Negro, Argentina}}

\begin{abstract}
The Hungarian physicist Eugene Wigner introduced random matrix models in physics to describe the energy spectra of atomic nuclei. As such, the main goal of Random Matrix Theory (RMT) has been to derive the eigenvalue statistics of matrices drawn from a given distribution.   The Wigner approach gives powerful insights into the properties of complex, chaotic systems in thermal equilibrium.  Another Hungarian, Cornelius Lanczos, suggested a method of reducing the dynamics of any quantum system to a one-dimensional chain by tridiagonalizing the Hamiltonian relative to a given initial state.   In the resulting matrix, the diagonal and off-diagonal Lanczos coefficients control transition amplitudes between elements of a distinguished basis of states.  We connect these two approaches to the quantum mechanics of complex systems by deriving analytical formulae relating the potential defining a general RMT, or, equivalently, its density of states, to the Lanczos coefficients and their correlations.  In particular, we derive an integral relation between the average Lanczos coefficients and the density of states, and, for polynomial potentials, algebraic equations that determine the Lanczos coefficients from the potential.  We obtain these results for generic initial states in the thermodynamic limit.  As an application, we compute the time-dependent ``spread complexity'' in  Thermo-Field Double states and the spectral form factor for Gaussian and Non-Gaussian RMTs.
\end{abstract}
\end{center}

\small{\vspace{3 cm}\noindent ${}^{\dagger}$vijay@physics.upenn.edu\\
${}^{\ddagger}$magan@sas.upenn.edu\\
${}^{*}$aqwalnut@sas.upenn.edu
}

\end{titlepage}

\setcounter{tocdepth}{2}

{\parskip = .4\baselineskip \tableofcontents}
\newpage


\section{Introduction}

It is natural to adopt a statistical approach to the study of very complex systems by fixing initially given, coarse-grained, physical data, and averaging over the space of systems compatible with this data. Universal properties are then approachable in the average. The most famous example of this philosophy concerns the statistical origin of thermodynamics. But there are many other applications. For example, in quantum mechanics the complexity of a system might prevent us from knowing the actual Hamiltonian, impeding us from deriving the energy spectrum and the time dependence of wavefunctions. Again, we can hope that by averaging over spaces of Hamiltonians with given coarse-grained data, we might be able to access important information about the system.

This philosophy was pioneered by Wigner \cite{10.2307/1970079,10.2307/1969956,10.2307/1970008}, who wanted to understand aspects of the spectra of heavy nuclei. Wigner postulated that, given the complexity of nucleon interactions, certain aspects of the spectrum such as the spacings between the eigenvalues would be universal, and well described by averaging the Hamiltonian over an ensemble. Mathematically, since quantum mechanical Hamiltonians are just matrices (possibly of infinite dimension) satisfying certain constraints (self-adjoint and accommodating potential symmetries), this statistical approach leads to the consideration of ensembles of ``random matrices''. Although random matrices had been studied previously in the literature \cite{10.2307/2331939,10.1214/aoms/1177728846}, Wigner's contributions are usually considered seminal in the field of Random Matrix Theory (RMT) (see \cite{Fyodorov:2011} for a brief history). Since its conception, RMT has mostly concentrated on the derivation of spectral statistics. It is quite striking that such statistics can be computed exactly for many ensembles of random matrices and associated measures; see \cite{akemann2011oxford} for a recent account.

Notwithstanding the success of RMT in explaining many aspects of the spectra of complex systems, applications to real time dynamics are scarce in the literature.\footnote{See \cite{akemann2011oxford} for a review of some applications to scattering and the S-matrix.} One reason is that in many problems of interest, not only the eigenvalues but also the structure of eigenstates potentially plays a role, complicating the analysis and formulation of the problem. From a broader perspective, the problem of solving the Schr\"{o}dinger equation for complex systems in non-equilibrium scenarios is hard, even numerically. The reason is the exponential growth of the dimension of the Hilbert space with the number of degrees of freedom. It would be very useful if RMT could be applied to these types of problems.

In this regard,  the Lanczos or Recursion method \cite{viswanath2008recursion} provides a powerful procedure for solving the Schr\"{o}dinger equation and elucidating the time dependence of wavefunctions. Intuitively, this approach optimizes the amount of information needed to evolve an initial state $\vert \psi\rangle$, given the Hamiltonian $H$ of the system. Relatedly, as shown recently \cite{SpreadC}, this method uses a basis that minimizes the spread of the wavefunction over its basis elements. It does so by exploiting the actual states with which the initial state mixes through time evolution, i.e., the states $H^n\,\vert \psi\rangle$, that appear in the Taylor expansion of the solution to the Schr\"{o}dinger equation. As reviewed below, we can use these states to build an orthonormal basis, simultaneously generating the minimum subspace of the Hilbert space that is needed for the computation of the time dependent wavefunction. This distinguished set of states is called the Krylov basis.  At a technical level, the method rests on the computation of the Hamiltonian in the Krylov basis, a procedure that produces the tridiagonal ``Hessenberg form'' of the Hamiltonian. Known numerical algorithms use Householder reflections \cite{2020SciPy-NMeth,LAPACK-Hessenberg} to compute Hessenberg forms of matrices. The non-zero tridiagonal entries of this form of a Hamiltonian are called the Lanczos coefficients. The advantage of using this basis is that the Schr\"{o}dinger equation simplifies and describes a one-dimensional chain dynamics, making it easier to find time-dependent wavefunctions.

Here, we aim to build a bridge between Wigner's ideas concerning the approximation of Hamiltonians through Random Matrix Theory, and the Lanczos approach for analyzing Hamiltonian evolution. In one direction, Lanczos might enable the expansion of RMT to non-equilibrium (time-dependent) scenarios in quantum mechanics. Conversely, RMT might improve our understanding of the late time dynamics of quantum chaotic systems. Technically, instead of focusing on spectral statistics, we must determine the statistics of the Lanczos coefficients, namely the statistics of the Hessenberg, or tridiagonal, form of random matrices. 

The Lanczos method has indeed seen a recent burst of activity in the context of the study of chaos and quantum black holes. In particular, as proposed in \cite{Parker:2018yvk} and further developed in \cite{Barbon:2019wsy,Avdoshkin:2019trj,Dymarsky:2019elm,Magan:2020iac,Jian:2020qpp,Rabinovici:2020ryf,Dymarsky:2021bjq,Kar:2021nbm,Caputa:2021sib,Kim:2021okd,Caputa:2021ori,Patramanis:2021lkx,Trigueros:2021rwj,Rabinovici:2021qqt,Hornedal:2022pkc,Bhattacharjee:2022vlt,Adhikari:2022whf,Muck:2022xfc,Fan:2022mdw,Rabinovici:2022beu,Bhattacharya:2022gbz,Liu:2022god}, it has been used to provide a notion of operator complexity, dubbed Krylov complexity. In this context Ref. \cite{Kar:2021nbm,Rabinovici:2022beu} use RMT techniques  to analyze certain aspects of Krylov complexity. The Lanczos approach has also been used to compute out-of-time-ordered four-point functions and Lyapunov exponents \cite{Kobrin:2020xms}. A geometric and group theory approach to the Lanczos method has been developed in \cite{Caputa:2021sib}, connecting it to the field of generalized coherent states \cite{coherent1,coherent2,coherent3}. Finally, the method has been used to analyze the long time dynamics of quantum states in chaotic systems \cite{SpreadC}, where a new measure of state complexity dubbed spread complexity was put forward (see \cite{Caputa:2022eye,Caputa:2022yju} for recent applications to topological phases of matter, and \cite{Bhattacharjee:2022qjw} for an analysis of weak ergodicity breaking).  Concretely,  \cite{SpreadC} used the Lanczos method numerically to find transparent relations between the wave function at late times, spread complexity, the so-called spectral form factor in RMT \cite{Guhr:1997ve}, and the universal ensembles of RMT. 
In the present article, we will analytically relate the distribution of Lanczos coefficients to the density of states in general Random Matrix Theories and for generic initial states.   As a special case, we will re-derive the Lanczos coefficients, real-time dynamics, and complexity growth of RMTs in Thermo-Field Double states that were previously computed in \cite{SpreadC}. We will apply our results to more general non-equilibrium (time-dependent) scenarios in a companion article \cite{usfuture}.

Six sections follow. In  Sections II and III we review pertinent aspects of RMT and the Lanczos approach respectively. These sections are brief and independent. They do not contain new results and can be omitted if the reader is familiar with those techniques.
Section IV uses the existence of a large system size limit to derive an integral relation between the density of states of a RMT and the average Lanczos coefficients.  Section V uses saddlepoint methods to derive an analytical relation between the potential of an RMT and the one and two-point functions of the Lanczos coefficients, averaged over the ensemble.  For polynomial potentials, this leads to algebraic equations determining the Lanczos coefficients.
All the analytical findings are verified by numerical computation.  Section VI applies our results to Thermo-Field Double states, rederiving some of the findings of \cite{SpreadC} concerning the Spectral Form Factor and spread complexity. We close the article with a discussion in Section VII.

After this paper was submitted we became aware of Ref. \cite{doi:10.1063/1.533010,PhysRevD.47.1640}, see  \cite{1995ZPhyB..99..101W,Witte_1997}, which contain related methods and results in the context of the thermodynamic limit of the Lanczos method, dubbed there ``the plaquette expansion''. Since large random matrices are a special case of a thermodynamic limit, some of our results can be translated to those of \cite{doi:10.1063/1.533010,PhysRevD.47.1640}. In terms of the specific application to Random Matrix Theory, our approach clarifies some of the constructions, thus allowing new analyses such as computation of the statistics of correlations of Lanczos coefficients.

\section{Aspects of Random Matrix Theory}

This section explains the relevant details of Random Matrix Theory and will help to contextualize some of the later results.  If the reader is familiar with RMT, generalized beta ensembles, and the tridiagonal Gaussian ensembles,  this section can be skipped.  For a more detailed account of RMT see \cite{akemann2011oxford}.

A Random Matrix Theory is defined by specifying the probability of finding a particular instance of a matrix in a given ensemble. For example, the Gaussian Unitary Ensemble (GUE) is an ensemble of Hermitian $N\times N$ matrices $H_{ij}$ with  measure
\be 
\frac{1}{Z_{\textrm{GUE}}}e^{-\frac{N}{2} \textrm{Tr}(H^2)}\;,
\ee
where $Z_{\textrm{GUE}}=2^{N/2}\pi ^{N^2/2}$, the partition function, normalizes the probability distribution. Further Gaussian examples include the Gaussian Orthogonal Ensemble (GOE), defined as an ensemble of real symmetric $N\times N$ matrices $H$ with  measure
\be 
\frac{1}{Z_{\textrm{GOE}}}e^{-\frac{N}{4} \textrm{Tr}(H^2)}
\;,\ee
and the Gaussian Symplectic Ensemble (GSE), defined as an ensemble of $N\times N$ Hermitian quaternionic matrices with  measure
\be 
\frac{1}{Z_{\textrm{GSE}}}e^{-N \textrm{Tr}(H^2)}\;.
\ee
These ensembles are often denoted by their Dyson index: $ \beta = 1$ for GOE, $ \beta = 2$ for GUE, and $ \beta = 4$ for GSE. The Dyson index counts the number of real components per matrix element. In these conventions, the variance of each off-diagonal matrix entry for all ensembles is fixed to $\sigma^2=1/N$.

More generally, we can consider the same ensembles but modify the measure. Instead of using the Gaussian measure, we can write the distribution as an exponential of a generic potential $V(H)$, by replacing $\textrm{Tr}(H^2) \to V(H)$.  If the potential is invariant under $H\rightarrow U\,H\,U^{\dagger}$, with $U$ a unitary matrix, then it is natural to diagonalize the matrix. This is a change of variables  $H\rightarrow (U,\Lambda)$, where $U$ is the diagonalizing matrix and $\Lambda$ is a diagonal matrix of eigenvalues. The statistics of $U$ and $\Lambda$ factorize; so, for a Gaussian ensemble we get the joint probability distribution of  eigenvalues
\be \label{jointg}
p(\lambda_1,\cdots ,\lambda_n)=Z_{\beta,N}\, e^{-\frac{\beta \,N}{4}\sum_{k}\lambda^2_k}\,\prod_{i<j}\vert \lambda_i-\lambda_j\vert^\beta\;,
\ee
where $Z_{\beta,N}$ normalizes the distribution, and we have used the Dyson index to write a single formula applying simultaneously to the GUE, GOE, and GSE ensembles.
The non-trivial aspect of this change of variables is the computation of the Jacobian, which famously gives rise to the Vandermonde determinant $\Delta\equiv\prod_{i<j}\vert \lambda_i-\lambda_j\vert^\beta$. Extensions to non-Gaussian measures are straightforward since the Jacobian remains the same, and we have to simply modify the exponent which arises from the potential. For example, for a polynomial potential
\be 
V(H)=\sum\limits_{n}\,v_n\,\textrm{Tr}(H^n)\;,
\ee
with some real constants $v_n$, the joint probability distribution is just
\be \label{jointb}
p(\lambda_1,\cdots ,\lambda_n)=Z_{\beta,N}\, e^{-\frac{\beta\,N}{4}\,\sum\limits_{n}\,v_n\,\sum_{k}\lambda^n_k}\,\prod_{i<j}\vert \lambda_i-\lambda_j\vert^\beta\;.
\ee
This distribution serves as the starting point for analyzing the spectrum. For example, consider the statistics of the eigenvalue density. For a single matrix, this is defined as
\be 
\rho(E)=\frac{1}{N}\sum\limits_{i}\,\delta(E-\lambda_i)\;,
\ee
where $\lambda_i$ are the eigenvalues. The eigenvalue density is a random variable as it depends on the random eigenvalues $\lambda_i$. We seek the  correlation functions
\be 
\overline{\rho(E_1)\,\rho(E_1)\,\cdots \,\rho(E_n)}\;.
\ee
Equivalently we can compute the Laplace transform of the eigenvalue density, namely the partition function
\be 
Z_\beta =\int_{0}^{\infty}\,dE\,\rho(E)\,e^{-\beta\,E}\;,
\ee
and compute the associated correlation functions
\be 
\overline{Z_{\beta_1}\,Z_{\beta_2}\,\cdots \,Z_{\beta_n}}\;.
\ee
We assumed here that the theory is stable so that the spectrum is bounded from below, and then shifted the minimum energy to zero without loss of generality.  This problem has been completely solved for a zoo of matrix models (see \cite{akemann2011oxford,https://doi.org/10.48550/arxiv.1510.04430}).

Importantly, the joint probability distribution of eigenvalues~(\ref{jointg}) makes sense for arbitrary positive real values of $\beta$. This extends the GOE, GUE, and GSE universality classes to  ``generalized $\beta$-ensembles''. Remarkably, for the Gaussian case, it was shown in \cite{Dumitriu_2002} that such joint probability distributions of eigenvalues equivalently arise from certain ensembles of tridiagonal random matrices whose entries are distributed as
\be \label{indl}
H_\beta=\frac{1}{\sqrt{\beta\,N}} \begin{pmatrix}
N(0,2) & \chi_{(N-1)\beta} &  & & \\
\chi_{(N-1)\beta} & N(0,2) & \chi_{(N-2)\beta} & & \\
 & \ddots & \ddots & \ddots & \\
 & & \chi_{2\beta} & N(0,2) & \chi_{\beta}\\
  & &  &\chi_{\beta} & N(0,2)
\end{pmatrix}\;,
\ee
where $N(r,s)$ are independent Gaussian random variables with mean $r$ and variance $s$, and the $\chi_r$ are independent chi-distributed random variables, i.e. with probability density function
\be 
p_{\chi_r}(x)= \frac{1}{2^{r/2-1}\,\Gamma (r/2)} \,x^{r-1}\,e^{-x^2/2}\;.
\ee
This tridiagonal rewriting of Gaussian matrix models is very useful for the numerical generation of random matrices since it only involves $\mathcal{O}(N)$ random numbers, instead of $\mathcal{O}(N^2)$ numbers in the usual approach.

\section{Lanczos approach to unitary evolution}\label{LancSec}

In quantum mechanical systems the time evolution of a  state $\vert \psi (t)\rangle$ (a vector in a Hilbert space) is determined by the Schr\"{o}dinger equation
\be
i\partial_t\vert \psi (t)\rangle=H\vert \psi (t)\rangle \; ,
\label{eq:se}
\ee
where $H$ is the Hamiltonian operator. The solution is
$
\vert \psi (t)\rangle= e^{-iHt}\,\vert \psi (0)\rangle\,.
$
Taylor expanding, one obtains
\be\label{exp}
\vert \psi (t)\rangle=\sum^\infty_{n=0}\frac{(-it)^n}{n!}\,\vert \psi_n\rangle\;,
\ee
where 
\be 
\vert \psi_n\rangle\equiv H^n\,\vert \psi(0) \rangle \;.
\ee 
Knowledge of $\vert \psi_n\rangle$ is then equivalent to knowledge of the time evolution. Although we can expand $\vert \psi (t)\rangle$ in the set of vectors $ \vert \psi_n\rangle $, the latter are neither orthogonal to each other nor normalized. To remedy this we can apply the Gram–Schmidt procedure to the $\vert\psi_n\rangle $. This  generates an ordered, orthonormal basis
$\mathcal{K}=\set{\ket{K_n}: n=0,1,2,\cdots}$
that expands the subspace of the Hilbert space explored by time development of $|\psi(0) \rangle \equiv |K_0\rangle$. The basis $\mathcal{K}$, typically known as the Krylov basis, may not expand the full Hilbert space, depending on the dynamics and the choice of the initial state.

The Krylov basis ${\cal K}$ can be derived via the Lanczos algorithm \cite{viswanath2008recursion}. Starting from $\vert \psi_n\rangle = H^n \,|\psi(0)\rangle$ this algorithm generates an orthonormal basis
$\mathcal{K}=\set{\ket{K_n}: n=0,1,2,\cdots}$ as: 
\be
|A_{n+1}\rangle=(H-a_{n})|K_n\rangle-b_n|K_{n-1}\rangle,\quad |K_n\rangle=b^{-1}_n|A_n\rangle\;,
\label{eq:Lrecursion}
\ee
where the Lanczos coefficients  $a_n$ and $b_n$ read
\be
a_n=\langle K_n|H|K_n\rangle,\qquad b_n=\langle A_n|A_n\rangle^{1/2}\;.
\label{eq:anbndef}
\ee
This iterative process has initial conditions $b_0 \equiv 0$ and $|K_0\rangle=|\psi(0)\rangle$ being the initial state. Notice that the Lanczos algorithm (\ref{eq:Lrecursion}) implies
\be\label{Hact}
H|K_n\rangle=a_n|K_n\rangle+b_{n+1}|K_{n+1}\rangle+b_n|K_{n-1}\rangle \; .
\ee
The Hamiltonian then becomes a tri-diagonal matrix in the Krylov basis
\be\label{triH}
H= \begin{pmatrix}
a_0 & b_1 &  & & & \\
b_1 & a_1 & b_2 & & &\\
& b_2 & a_2 & b_3 & & \\
& & \ddots & \ddots & \ddots & \\
& & & b_{N-2} & a_{N-2} & b_{N-1}\\
&  & &  &b_{N-1} & a_{N-1}
\end{pmatrix}\;.
\ee
For finite-dimensional systems, this tridiagonal form of the Hamiltonian is known as the ``Hessenberg form'' of the matrix. For finite dimensional matrices there are numerically stable algorithms for computing it, see \cite{SpreadC} for details. There is also a more general method for computing the Lanczos coefficients, which remains valid for infinite dimensional systems. It starts from the ``survival amplitude'', i.e., the amplitude that the state at time $t$ is the same as the state at time zero, see \cite{SpreadC} for a detailed account and references.

This algorithm for matrix tridiagonalization was originally conceived by Lanczos \cite{Lanczos1950AnIM} to aid in the computation of eigenvalues and eigenvectors. Once the Hamiltonian is in tridiagonal form there are more effective methods for solving the eigenvalue/vector problem.\footnote{The original Lanczos algorithm suffers from an instability. The generated Krylov basis is less and less orthonormal at each step of the algorithm. This can be cured by orthogonalizing with all previously generated vectors, and not with the last two only \cite{articleOj}.} Nowadays there are better algorithms for this problem, but the tridiagonal form can be used directly to solve for the time evolution of the quantum system, without finding eigenvalues/vectors. Indeed, the wavefunction in the Krylov basis can be obtained by exponentiating the Hessenberg form and applying it to the initial state. The tridiagonal form of the Hamiltonian then implies that the Schr\"{o}dinger equation (\ref{eq:se}), when written in the Krylov basis, takes the form
\be\label{SchrodingerEq}
i\partial_t\,\psi_n(t)=a_n\,\psi_n(t)+b_{n+1}\,\psi_{n+1}(t) + b_n\,\psi_{n-1}(t) \;.
\ee
We conclude that in this basis, any time evolution becomes a one-dimensional motion with a  ``hopping'' Hamiltonian. 

Finally, given $\psi_n(t)$ it is natural to analyze the average position in the Krylov chain
\be 
C(t) = C_\mathcal{K}(t) = \sum_{n} n \,\vert \psi_n(t)\vert^2 =
\sum_{n} n \,p_n(t)\;,\label{eq:spread_comp_def}
\ee
and the effective dimension of the Hilbert space explored by the time evolution
\be 
C_{{\rm dim}}
= e^{H_\textrm{Shannon}}=e^{-\sum\limits_n p_n\log p_n}\;.
\label{ecom2}
\ee
In Ref.\cite{SpreadC} it was proven that these quantities, as computed in the Krylov basis, are a global minimum over different choices of basis. In this sense, they are sensible quantifications of complexity, understood as a measure of the spread of the wavefunction in the time evolved quantum state.

Examples in which the computation of the Lanczos coefficients $a_n$ and $b_n$ (the Hamiltonian in  tridiagonal form~(\ref{triH})) can be carried out analytically are sparse in the literature; see \cite{viswanath2008recursion} for old examples and \cite{Parker:2018yvk,Caputa:2021sib,SpreadC,Muck:2022xfc} for more recent ones. But there are extensive numerical applications of these techniques to different aspects of quantum mechanical theories. In particular, this approach has been used to study operator complexity in \cite{Parker:2018yvk,Barbon:2019wsy,Avdoshkin:2019trj,Dymarsky:2019elm,Magan:2020iac,Jian:2020qpp,Rabinovici:2020ryf,Dymarsky:2021bjq,Kar:2021nbm,Caputa:2021sib,Kim:2021okd,Caputa:2021ori,Patramanis:2021lkx,Trigueros:2021rwj,Rabinovici:2021qqt,Hornedal:2022pkc,Bhattacharjee:2022vlt,Adhikari:2022whf,Muck:2022xfc,Fan:2022mdw,Rabinovici:2022beu,Bhattacharya:2022gbz,Liu:2022god}, more  recently in connection with quantum chaos, random matrices and state complexity in \cite{SpreadC}, for the study of topological phases of matter \cite{Caputa:2022eye,Caputa:2022yju}, and for an analysis of weak ergodicity breaking \cite{Bhattacharjee:2022qjw}.

\section{Tridiagonalizing random matrices: a first approach}

Given the above comments about Random Matrix Theories and the Lanczos approach to non-equilibrium quantum mechanics, we now state our problem precisely. Given a Hermitian operator in a Hilbert space (a Hamiltonian) and a quantum state, we can apply the Lanczos method to obtain a tridiagonal matrix as in~(\ref{triH}). In RMT we have an ensemble of Hamiltonians, and thus the Lanczos algorithm will produce an ensemble of tridiagonalized random matrices. We thus seek to find the statistics of these tridiagonal matrices, namely the statistics of the Lanczos coefficients $a_n$ and $b_n$, given a particular RMT, defined by some potential $V(H)$ and/or its associated average density of states $\overline{\rho(E)}$.

Equivalently, given a RMT, we want to find the joint probability distribution for the Lanczos coefficients
\be 
p(a_0,\cdots ,a_{N-1},b_1,\cdots ,b_{N-1})\;,
\ee
together with averaged quantities such as
\be 
\overline{a_{m}\cdots a_{n} \,b_{r}\cdots b_{s}}\;.
\ee
In this section, we will identify the joint distribution of Lanczos coefficients for Gaussian theories, and derive an analytical formula for the average Lanczos coefficients in a generic RMT with an arbitrary potential.

\subsection{Exact examples: the Gaussian generalized \texorpdfstring{$\beta$}{beta}-ensembles}

It is instructive to start with examples that are both simple and exactly solvable. To this end, we start with the GOE. This is a Gaussian-distributed ensemble of orthogonal matrices:
\be 
\frac{1}{Z_{\textrm{GOE}}}e^{-\frac{N}{4} \textrm{Tr}(H^2)}
\label{eq:GOE2}
\;.\ee
As an initial state we choose
\be 
\vert\psi\rangle =(1,0,0,\cdots, 0)^{T}\;.
\label{eq:GOEinit}
\ee
These coefficients are given in the basis in which the Hamiltonian is a random matrix drawn from a GOE distribution.  This is a  generic choice of initial state since, given the orthogonal invariance of the GOE, we would obtain the same results for any initial state which is an orthogonal rotation of (\ref{eq:GOEinit}).

Since the initial state is always part of the Krylov basis, the Lanczos procedure is solved if we find a similarity transformation $O$ such that
\be 
O\,H\,O^{T}=\textrm{Tridiagonal}\,\,\,\,\,\,\,\,\,\, O\,\vert\psi\rangle=\vert\psi\rangle\;,
\label{eq:simO}
\ee
where the tridiagonal matrix has real entries and off-diagonal positive entries.  We require positive off-diagonal entries conventionally because the $b_n$ coefficients in the Lanczos procedure are chosen to be positive real numbers by a choice of phases in the Krylov basis elements (see eq.~\ref{eq:anbndef}). Since a similarity transformation takes orthonormal bases to orthonormal bases, the second relation in (\ref{eq:simO}) implies the initial state is part both of the initial basis and the new basis. The first relation in (\ref{eq:simO}) then implies that the new basis is the desired Krylov basis.

The second relation is fulfilled by any matrix of the form
\be\label{simU}
O= \begin{pmatrix}
1 & 0 \\
0 & M 
\end{pmatrix}\;,
\ee
where the zeroes represent $N-1$-dimensional vectors and $M$ is an $(N-1)\times (N-1)$ matrix. As shown by Dumitriu and Edelman \cite{Dumitriu_2002}, the tridiagonal form is then achieved as follows. Without loss of generality, suppose we draw an $N\times N$ matrix $H_N$ from the GOE ensemble. We can write it as
\be\label{Hsim}
H_{N}= \begin{pmatrix}
a_0 & x^{T} \\
x & H_{N-1} 
\end{pmatrix}\;,
\ee
where $x=(x_1,\cdots,x_{N-1})$ is a generic $(N-1)$ dimensional vector, $a_0$ is the first entry of the matrix and $H_{N-1}$ is an $(N-1)\times (N-1)$ matrix. We have called the first entry $a_0$ because, given the initial state, it coincides with the first Lanczos coefficient, namely
\be 
a_0=\langle\psi\vert\, H_{N}\,\vert\psi\rangle\;.
\ee
We will now construct the required similarity transformation in steps. We first choose $O$ of the form~(\ref{simU}) with any $(N-1)\times (N-1)$ orthogonal matrix $M$ such that
\be 
M\,x=\vert\vert x\vert\vert_2\,(1,0,0,\cdots, 0)^{T}\equiv \vert\vert x\vert\vert_2\,e_1^T\;,
\ee
where $\vert\vert x\vert\vert_2$ is the norm of $x$. This brings the matrix $H_{N}$ to the following form
\be 
O\,H_N\,O^{T}=\begin{pmatrix}
a_0 & \vert\vert x\vert\vert_2 \,e_1 \\
\vert\vert x\vert\vert_2 \,e_1^{T} & M\, \,H_{N-1}\,M^{T}
\end{pmatrix}\;.
\label{eq:GOEtransform1}
\ee

We can argue as follows that the statistics of $a_0$, $\vert\vert x\vert\vert_2$ and $M\, \,H_{N-1}\,M^{T}$ are uncorrelated given the statistics of the GOE in (\ref{eq:GOE2}).  Note first that the probability distribution in (\ref{eq:GOE2}) is proportional to $\exp\gr{-N/4 \, {\rm Tr}(H^2)}$ with a symmetric $H$ and hence is independently Gaussian in each entry of $H$ up to the symmetricity constraint.   Thus, since the entry $a_0$ is unchanged after the transformation (\ref{eq:GOEtransform1}), it continues to be  Gaussian distributed with the same mean and variance as the GOE, namely $2/N$ in our normalization since it belongs to the diagonal. The norm $\vert\vert x\vert\vert_2$ is then the square root of the sum of uncorrelated Gaussian random variables with zero mean and variance equal to the off-diagonal entries of the GOE, which is $1/N$. This gives rise to the chi-distribution defined earlier: $\chi_{N-1}/\sqrt{N}$. Finally, since $H_{N-1}$ is by definition a random matrix from the GOE ensemble of $(N-1)\times (N-1)$ matrices, and this ensemble is invariant under orthogonal transformations, $M\, \,H_{N-1}\,M^{T}$ is just a random matrix from the GOE.\footnote{Note that $H_{N-1}$ and/or $M\, \,H_{N-1}\,M^{T}$ belong to the GOE ensembles of $(N-1)\times (N-1)$ orthogonal matrices, but normalized such that the off-diagonal entries have variance equal to $1/N$, since they descend from the original $H_{N}$ which was normalized in such a way.} Repeating this procedure along the diagonal direction, and remembering that off-diagonal variances are fixed to $1/N$, we arrive at a tridiagonal matrix whose Lanczos coefficients have statistics given by
\be 
H_N=\begin{pmatrix}
a_0 & b_1 &  & &  \\
b_1 & a_1 & b_2 & & \\
 & \ddots & \ddots & \ddots & \\
 & & b_{N-2} & a_{N-2} & b_{N-1}\\
  & &  &b_{N-1} & a_{N-1}
\end{pmatrix}=\frac{1}{\sqrt{N}} \begin{pmatrix}
N(0,2) & \chi_{(N-1)} &  & & \\
\chi_{(N-1)} & N(0,2) & \chi_{(N-2)} & & \\
 & \ddots & \ddots & \ddots & \\
 & & \chi_{2} & N(0,2) & \chi_{1}\\
  & &  &\chi_{1} & N(0,2)
\end{pmatrix}\;.\nonumber
\ee
The same argument applies to the GUE and GSE and the Lanczos algorithm will give rise to the following Lanczos coefficients
\be 
H_N=\begin{pmatrix}
a_0 & b_1 &  & &  \\
b_1 & a_1 & b_2 & & \\
 & \ddots & \ddots & \ddots & \\
 & & b_{N-2} & a_{N-2} & b_{N-1}\\
  & &  &b_{N-1} & a_{N-1}
\end{pmatrix}=\frac{1}{\sqrt{\beta\,N}} \begin{pmatrix}
N(0,2) & \chi_{(N-1)\beta} &  & & \\
\chi_{(N-1)\beta} & N(0,2) & \chi_{(N-2)\beta} & & \\
 & \ddots & \ddots & \ddots & \\
 & & \chi_{2\beta} & N(0,2) & \chi_{\beta}\\
  & &  &\chi_{\beta} & N(0,2)
\end{pmatrix}\;.\nonumber
\ee
We conclude that, for the Gaussian Orthogonal, Unitary and Symplectic ensembles, and generic initial states, the $a_n$ Lanczos coefficients are independent random variables with zero mean and variance equal to the one of the RMT. The $b_n$ Lanczos coefficients are also independently distributed random variables with distribution
\be 
p(b_n)=2\, \left( \frac{\beta\,N}{2}\right) ^{(N-n)\beta/2}\,\frac{1}{\Gamma ((N-n)\beta/2)} \,b_n^{(N-n)\beta-1}\,e^{-\beta\,N\,b_n^2/2}\;.
\ee
The average of the $b_n$ Lanczos coefficients is then
\be 
\overline{b_n}=\sqrt{\frac{2}{\beta\,N}}\,\frac{\Gamma ((1+(N-n)\beta)/2)}{\Gamma ((N-n)\beta/2)}\;,
\ee
while the variance reads
\be 
\sigma^2=\overline{(b_n-\overline{b_n})^2}=\frac{(N-n)}{ N}-\overline{b_n}^2\;.
\ee
We verify this numerically  in  Fig~(\ref{fig:ex_ab}).

Finally, notice that the generalized $\beta$-ensembles for any $\beta>0$ define  Hamiltonians with sensible Lanczos coefficients given by their tridiagonal version. 

\subsection{The average Lanczos coefficients for generic RMT: a physicist's argument}\label{rho_to_bx}

Above we used the work of Dumitriu and Edelman \cite{Dumitriu_2002}, to explain the tridiagonal form, and hence the Lanzos coefficients, of Gaussian random matrix ensembles.  We now generalize these results to generic Random Matrix Theories with generic densities of states $\rho(E)$, and compute the statistics of the associated tridiagonal matrices, namely the statistics of the Lanczos coefficients.


Notice first that, in numerical computations (see \cite{SpreadC} and below), and in the exact analytical solution above for the Gaussian case, the Lanczos coefficients $a_n, b_n$, when expressed as a function of $x=n/N$, have a continuous large-$N$ limit $a(x),b(x)$, where $N$ is the dimension of the Hilbert space. We will show that an approximate analytical formula relating the average Lanczos coefficients to the density of states can be derived simply by assuming the existence of a continuous large-N limit.


Let $H$ be our Hamiltonian, and $a_n, b_n$ be its Lanczos coefficients when starting from some initial state. In the Krylov basis, the dynamics is that of a 1-D chain~(\ref{SchrodingerEq}). We cut this 1-D Krylov chain into many (say $S=\sqrt{N}$) shorter ``segments'' (of length, say $L=\sqrt{N}$). This modification is accomplished by setting $b_n$ to zero at the boundaries of each segment and then setting the $a_n$ and $b_n$ within each segment to be their average across the segment. We claim that for large $N$, this process approximately preserves the density of states. The intuition is that for large $N$, each of these segments is long, so the first step of cutting the 1-D chain into segments only introduces a small effect from the edges of the segments. The second step of setting the Lanczos coefficients to their average also only introduces a small effect because we are assuming that the $a_n$'s and $b_n$'s, when expressed as a function of $x=n/N$, have a continuous large-N limit.  \footnote{We are using this trick only to approximate the total density of states. At the end of the day, the Lanczos coefficients we will find do not vanish anywhere, except at the end of the Krylov chain.}  To have such a limit, the Lanczos coefficients must be sufficiently slowly changing in any segment of length $L$ such that $L/N \to 0$ as $N\to \infty$.  Below, we will numerically confirm the validity of this approximation and also provide a more precise argument making use of the moments of the Hamiltonian.

We thus proceed to compute the Lanczos coefficients $a(x)$ and $b(x)$, as a function of $x=n/N$ for $0\leq x\leq 1$ as $N \to \infty$. These are the natural variables in the large-$N$ limit, and we seek to compute their averages $\overline{a(x)},\overline{b(x)}$ over the matrix ensemble.  For notational simplicity,  we will omit the overline denoting the average so that the functions $a(x)$, $b(x)$ and $\rho(E)$ denote the average Lanczos coefficients and density of states respectively.

In the block approximation of the Hamiltonian, the density of states is simply the sum of the densities of states of each of the $S$ segments. Each segment, of size $L$, has approximately constant $a$ and $b$, is therefore Toeplitz.  A standard formula from linear algebra then tells us the block has eigenvalues $
E_k=2\,b\,\cos (k\pi/(L+1)) +a $, with $k=1\cdots L$.  Recall also that we can take $a$ to be real and $b$ to be positive given the Lanczos algorithm (\ref{eq:anbndef}). The density of states in such a segment is then
\be 
\rho_{a,b}(E)=\frac{1/L}{|dE_k/dk|}= \frac{H(4\,b^2-(E-a)^2)}{\pi\,\sqrt{4\,b^2-(E-a)^2}}\;,
\ee
where $H(x)$ is the Heaviside step function setting the density of states to zero outside of its domain.  Here we divided by $L$ to normalize the density of states to integrate to 1 over its domain, as conventional in Random Matrix Theory. In the large-$N$ limit, both $S=\sqrt{N}$ and $L=\sqrt{N}$ are large as well. Noticing that $N=SL$, the total (normalized) density of states is then approximated by an integral
\be \label{intdl}
\rho(E) = \frac{1}{N}\sum_{n=1}^{S} \frac{L \,  H(4\,b(nL)^2-(E-a(nL))^2)}{\pi\,\sqrt{4\,b(nL)^2-(E-a(nL))^2}} = \int_0^1 dx\, \frac{H(4\,b(x)^2-(E-a(x))^2)}{\pi\, \sqrt{4\,b(x)^2-(E-a(x))^2}}\;.
\ee
This formula explicitly relates the average Lanczos coefficients to the density of states in the large-N limit when the block approximation of the Hamiltonian is valid, and $x$ is not too close to the edges $x=0,1$, where by close we mean $x\sim\mathcal{O}(1/N)$. Eq.~\ref{intdl} is one of the main results of this article. 

A somewhat more precise argument for this formula makes use of the moments of the Hamiltonian.  Notice that because the Hamiltonian is tridiagonal, $[H^n]_{ij}$, the $i,j$th entry of $H^n$, is an $n$-th order polynomial of some $a_j,b_j$ with $|j-i|<n$. Let $k$ be an index within $n$ of $i,j$ and let $n$ scale sub-linearly in $N$ in the large $N$ limit so that the difference in $x$ satisfies $|i-k|/N<n/N\to 0$,  Then assuming as above that $a(x)$ and $b(x)$ have continuous large $N$ limits, we can approximate all instances $a_i,b_i$ with $a_k,b_k$ instead, and so\footnote{This approximation applies far from the first rows and columns of $H$.  Namely, it does not apply when $i$ is $O(1)$ in the large $N$ limit (equivalently when $x\sim\mathcal{O}(1/N)$).}
\be
[H^n]_{ij}\,\approx\,  [T(a_k,b_k)^n]_{ij}\;,\label{eq:local_approx}
\ee
where $T(a,b)$ is an infinite tridiagonal matrix\footnote{We are not saying that $H^n$, a finite dimensional matrix, is approximately equal to an infinite dimensional matrix. We are just asserting the approximate equality of certain elements in both matrices.} with constant diagonal $a$ and off-diagonal $b$
\be\label{Triav}
    T(a,b)=\begin{pmatrix}\ddots&\ddots&&&\\
    \ddots&a&b&&\\
    &b&a&b&\\
    &&b&a&\ddots\\
    &&&\ddots&\ddots
    \end{pmatrix}.
\ee
Therefore, the trace of the moment of $H$ is 
\be
    \tr{H^n}=\sum_i [H^n]_{ii}\approx \sum_i [T(a_i,b_i)^n]_{ii}\;.\label{eq:trace_scaleshiftsolve}
\ee
Notice that the {\it matrices} on the right hand sides of \eqref{eq:local_approx} and \eqref{eq:trace_scaleshiftsolve} depend on $i$; for every index $i$ on the left hand side, we make this approximation with a different matrix.

Now we make some observations about the matrix $T(a,b)$. First, by counting Dyck paths one can verify that when $a=0$ and $b=1$,
\be
    [T(0,1)^n]_{ij}=\binom{n}{(n+ (i-j))/2},\label{eq:trid_to_binomial}
\ee
where the binomial is taken to be zero when $(n+(i-j))/2$ is not an integer. We are also going to use the integral identities 
\bea
 \binom{n}{n/2}  &=&  \int_{-2}^2 dx\,\frac{x^n}{\pi \sqrt{4-x^2}} = [T(0,1)^n]_{ii}\;,   \nonumber \\ 
\binom{n}{(n+1)/2} &=& \int_{-2}^2 dx\,
   \frac{x^{n+1}}{2\pi \sqrt{4-x^2}}
= [T(0,1)^n]_{i(i+1)} 
   \;.
   \label{eq:trid_integral}
\eea
We will only need the first expression
in this section, but include the $i-j=1$ case in the second expression for use in later sections. 

Second, we note that $T(a,b)$ is related to $T(0,1)$ by a scaling and a shift
\be 
T(a,b)=b\, T(0,1)+a \, .
\ee
Using this relation and the integral expression of $[T(0,1)^n]_{ii}$, we can expand 
\begin{equation}
[T(a,b)^n]_{ii} = \int_{-2}^2 dx \, \sum_k \binom{n}{k} b^k a^{n-k}  \frac{x^k}{\pi \sqrt{4 - x^2}}\;.
\end{equation}
Adding  powers of $x^k$ and substituting $E = bx + a$, and similarly treating $[T(a,b)^n]_{(i+1)i}$, we find
\bea
    [T(a_i,b_i)^n]_{ii}&=& \int_{a_i-2b_i}^{a_i+2b_i} dE\,\frac{E^n}{\pi \sqrt{4b_i^2-(E-a_i)^2}}\;,\\
    ~[T(a_i,b_i)^n]_{i,i+1}&=& \int_{a_i-2b_i}^{a_i+2b_i} dE\,\frac{E^n(E-a_i)}{b_i\pi \sqrt{4b_i^2-(E-a_i)^2}}\label{eq:trid_approx} \, .
\eea
Thus equation \eqref{eq:trace_scaleshiftsolve} becomes
\bea\label{secapp}
    \int dE\,E^n\rho(E)&\approx& \frac{1}{N}\sum_i \int_{a_i-2b_i}^{a_i+2b_i} dE\, \frac{E^n}{\pi \sqrt{4b_i^2-(E-a_i)^2}}\\
    &\approx& \int_0^1 dx\,\int_{a(x)-2b(x)}^{a(x)+2b(x)} dE\, \frac{E^n}{\pi \sqrt{4b(x)^2-(E-a(x))^2}}\;,\;
\eea
where we have taken the large $N$ limit and used the same definitions of $a(x),b(x)$ as above. Note that we always use the convention that $\rho(E)$ is normalized so that its integral is one. Given~(\ref{secapp}), and the fact that the polynomials form a complete basis of functions, we can infer that
\bea
    \rho(E)\approx \int_0^1 dx\,\frac{H(4\,b(x)^2-(E-a(x))^2)}{\pi \sqrt{4\,b(x)^2-(E-a(x))^2}}\;,\label{eq:intg_eqn_abrho}
\eea
arriving again at the relation (\ref{intdl}) between the density of states and the Lanczos coefficients. In the next two sections, we explain how to solve this equation and verify the results for specific examples numerically.

\subsubsection{Solving the integral equation}
\label{solve_intdl}

We now explain a strategy for solving the integral equation (\ref{eq:intg_eqn_abrho}),  thus deriving the Lanczos coefficients from the density of states, under the assumption that the interval of support in $E$ of the integrand of \eqref{eq:intg_eqn_abrho}  shrinks monotonically as $x$ increases. In fact, it is known that  the outer envelope of the Lanczos coefficients does contract monotonically  \cite{doi:10.1063/1.533010,PhysRevD.47.1640}. Concretely, the Heaviside function dictates that a given value of $x$ in the integrand of \eqref{eq:intg_eqn_abrho} contributes to the density of states for energies in the range $a(x) - 2 b(x) \leq E \leq a(x) + 2 b(x)$.  We will assume that if $x_1 > x_2$ then $a(x_1) +2 b(x_1) < a(x_2) +2 b(x_2)$ and $a(x_1) - 2 b(x_1) > a(x_2) - 2 b(x_2)$, or, infinitesimally, 
$a'(x) + 2b'(x) < 0$ and $a'(x) - 2 b'(x) > 0$.  This implies that $ 2b'(x) \leq a'(x) < -2 b'(x)$.  Solutions to this constraint require $b'(x) < 0$ and $|a'(x)| < -2 b'(x)$.
Solutions to the integral equation \eqref{eq:intg_eqn_abrho} that correspond to the Lanczos coefficients at large $N$ obey this monotonicity assumption for all examples that we have checked that also satisfy the assumption of a continuous large $N$ limit.\footnote{It would useful to give a  proof establishing physical conditions under which the monotonicity condition holds. Note also that the derivation of \eqref{eq:intg_eqn_abrho} did not require this assumption.}

We start with theories for which $a(x)=0$.  By inspection, this scenario (\ref{eq:intg_eqn_abrho}) corresponds to Hamiltonians whose density of states $\rho(E)$ is even in $E$.  In fact, the converse is also true -- if the density of states is even, then $a(x) = 0$.  This follows because the Heaviside function in \eqref{eq:intg_eqn_abrho} determines a non-even range of $E$  with a a non-vanishing density of states if $a(x) \neq 0$.\footnote{More generally  we only need the density of states to be symmetric around a certain energy $E_0$. If this is the case we just shift so that $E_0\rightarrow 0$ to arrive at an even density of states.} In this case, the integral equation can be solved by deconvolving via Laplace transforms. The monotonicity condition tells us that $b'(x)\leq 0$, so we may define $\gamma(b(x))\equiv b'(x)$, where $\gamma(b)$ can be thought as the density of values of $b$. Then a change of variables gives
\be 
\rho(E) = \int_{b_0}^{0} \, db\,\gamma(b)\, \frac{H(4\,b^2-E^2)}{\pi\,\sqrt{4\,b^2-E^2}}\;,
\ee
where, using the monotonicity assumption, $b$ goes from some initial $b_0$ at $x=0$ to $b=0$ at $x=1$, where the Lanczos algorithm must halt since we have reached the dimension of the Hilbert space, thus implying that $b(1) = 0$. Further substituting $b=b_0\,e^{-z}$ and $E=E_0\,e^{-\epsilon}$ we arrive at
\eqm{
\rho(E_0\,e^{-\epsilon}) &= \int_{0}^\infty \,dz\,b_0\,e^{-z}\,\gamma(b_0\,e^{-z})\,\frac{H(4-E_0^2\,b_0^{-2}\,e^{2(z-\epsilon)})}{\pi\, b_0\,e^{-z}\,\sqrt{4-E_0^2\,b_0^{-2}\,e^{2(z-\epsilon)}}}\\
&=\int_{0}^\infty \,dz\,b_0\,\gamma(b_0\,e^{-z})\,\frac{H(4-E_0^2\,b_0^{-2}\,e^{2(z-\epsilon)})}{\pi \,b_0\,\sqrt{4-E_0^2\,b_0^{-2}\,e^{2(z-\epsilon)}}}\;.}
This equation is of the form
\bea 
f(\epsilon)&=\int_{0}^\infty \,dz\,g(z)\,h(\epsilon-z)\;.
\eea
Since $f$ and $h$ are known, we can deconvolve, either numerically or analytically, using the Laplace transform to solve for $g(z)=b_0\,\gamma(b_0\,e^{-z})$. Then solving the differential equation $b'(x)=\gamma(b(x))$ gives us $b(x)$.

Next we consider the general case with $a(x)\neq 0$. Recall again that the support of $E$ in the integrand of \eqref{eq:intg_eqn_abrho} lies in the interval from $E_\text{left}(x)=a(x)-2b(x)$ to $E_\text{right}(x)=a(x)+2b(x)$, and that we have assumed that this interval shrinks as $x$ increases. To use this assumption we first integrate to compute the cumulative density of states
\eqref{eq:intg_eqn_abrho} to get
\bea
    P(E)=\int_{E_\text{min}}^EdE\,\rho(E)&\approx& \int_0^1 dx\int_{E_\text{left}(x)}^EdE\,\frac{H(4\,b(x)^2-(E-a(x))^2)}{\pi \sqrt{4\,b(x)^2-(E-a(x))^2}}\nonumber\\
    &=&\int_0^1 dx\,P_c\gr{4\frac{E-E_\text{left}\gr{x}}{E_\text{right}\gr{x}-E_\text{left}\gr{x}}-2 }\;,
    \label{eq:cumulativeint}
\eea
where $P_c(z)=\int_{-2}^z \frac{1}{\pi \sqrt{4-z'^2}}dz'$ and $P(E)=\int_{E_\text{min}}^EdE\,\rho(E)$ are  cumulative distributions.  The lower limit of the integral over $E$ in the first line is $E_\text{left}(x)$ because a given $x$ only contributes to the density of states for $E > E_\text{left}(x)$.  To arrive at the second line we used the substitution $E = b z' - a$.

Now suppose that we consider an $E$ that is equal to $E_\text{left}(X)$ for some $X$.  Then our monotonicity assumption says that the density of states for any $E < E_\text{left}(X) $ only takes contributions from $x \leq X$.  Thus the cumulative density of states from $E = E_\text{min}$ up to $E = E_\text{left}(X)$ only takes contributions from $x \leq X$.
Thus, we can 
we can reduce the limits of integration in (\ref{eq:cumulativeint}) to
\bea\label{keynum}
    P(E_\text{left}(X))
    &=&\int_0^X dx\,P_c\gr{4\frac{E_\text{left}(X)-E_\text{left}\gr{x}}{E_\text{right}\gr{x}-E_\text{left}\gr{x}}-2 }\;.
\eea
We may carry out the same steps but integrate from the other end to obtain an equation for $E_\text{right}(X)$ as well
\bea\label{keynum2}
    1-P(E_\text{right}(X))
    &=&\int_0^X dx\,\sq{1-P_c\gr{4\frac{E_\text{right}(X)-E_\text{left}\gr{x}}{E_\text{right}\gr{x}-E_\text{left}\gr{x}}-2 }}\;.
\eea
Since these equations only ``look into the past'' of $X$, we can iteratively solve this system of equations for $E_\text{left}(x)$ and $E_\text{right}(x)$  as in algorithm \ref{alg:scaleshiftsolve}, given the functions $P(E)$ (which can be determined from the density of states) and $P_c(z)$ (which can be determined by computation of the defining integral).  Then the Lanczos coefficients $a(x)$ and $b(x)$ are simply given by inverting the definitions of $E_\text{left,right}(x)$: $a(x) = (E_\text{right}(x) + E_\text{left}(x))/2 $ and $b(x) = (E_\text{right}(x) - E_\text{left}(x))/4 $.

Here, we assumed that the density of states is supported over a finite interval $[E_\text{min},E_\text{max}]$; if not, we can cut off the tail of the density of states as an approximation. The numerical algorithm works by discretizing $x$ into $M$ small intervals of size $1/M$, and assuming that $E_\text{left},E_\text{right}$ are constant over those intervals, so that the integrals in (\ref{keynum}, \ref{keynum2}) become discrete sums. Evaluating the resulting equations at the points $m/M+\epsilon$ gives us (\ref{eq:scaleshiftsolve_1}, \ref{eq:scaleshiftsolve2}), which can be used to solve for the values of $E_\text{left},E_\text{right}$ at $m/M$ from the values at $i/M$ for $i<m$.

\begin{algorithm}[H]
\caption{Approximating solutions to the integral equation}
\label{alg:scaleshiftsolve}
\begin{algorithmic}[1]
\State $E_\text{right}(0) \gets E_\text{max}$
\State $E_\text{left}(0) \gets E_\text{min}$
\For{$m\in 1:M$}
    \State Set $E_\text{left}\gr{\frac{m}{M}}$ be the lowest solution $E>E_\text{left}\gr{\frac{m-1}{M}}$ of 
    \bea
        P(E)=\frac{1}{M}\sum_{i=0}^{m-1} P_c\gr{4\frac{E-E_\text{left}\gr{\frac{i}{M}}}{E_\text{right}\gr{\frac{i}{M}}-E_\text{left}\gr{\frac{i}{M}}}-2}
        \label{eq:scaleshiftsolve_1}
    \eea
    \State Set $E_\text{right}\gr{\frac{m}{M}}$ be the highest solution $E<E_\text{right}\gr{\frac{m-1}{M}}$ of
    \bea
        1-P(E)=\frac{1}{M}\sum_{i=0}^{m-1} \sq{1-P_c\gr{4\frac{E-E_\text{left}\gr{\frac{i}{M}}}{E_\text{right}\gr{\frac{i}{M}}-E_\text{left}\gr{\frac{i}{M}}}-2}}
        \label{eq:scaleshiftsolve2}
    \eea
\EndFor
\State $a\gets \frac{E_\text{left}+E_\text{right}}{2}$
\State $b\gets \frac{E_\text{right}-E_\text{left}}{4}$
\end{algorithmic}

\end{algorithm}

To solve Eqs.~(\ref{eq:scaleshiftsolve_1} and \ref{eq:scaleshiftsolve2}) 
we used a bisection method to get a good initial estimate before using Newton's method to find an accurate solution. Note that the strict inequality for $E$ is necessary because $E=E_\text{left}\gr{\frac{n-1}{N}}$ and $E=E_\text{right}\gr{\frac{n-1}{N}}$ are themselves solutions to the equation. We will describe a few steps of the procedure to make it more intuitive. At $m=0$, solving $P(E)=0$ and $1-P(E)=0$ gives $E_\text{left}(0)=E_\text{min}$ and $E_\text{right}(0)=E_\text{max}$, respectively. The interval of support of the zeroth step is the same as the interval of support of the density of states. For $m=1$, we solve for the $E_\text{left}(1/M)$ such that $P(E_\text{left}(1/M))=\frac{1}{M}P_c\gr{4\frac{E_\text{left}(1/M)-E_\text{min}}{E_\text{max}-E_\text{min}}-2}$, and similarly for the $E_\text{right}$. The right-hand side is the contribution to the cumulative density of states of the first interval of the integral from $x=0$ to $x=1/M$. We then set the bounds of the next contribution to be the $E_\text{left},E_\text{right}$ where the first contribution is equal to the cumulative density of states.  This is where the first contribution starts undershooting the cumulative density of states and another contribution is needed to make up for it.

\subsubsection{Examples and numerical verification}
\label{sec:examples1}

First consider the Gaussian Unitary Ensemble, defined by a potential 
\begin{equation}
V_g(E)\equiv E^2 \;.
\end{equation}
The exact tridiagonalization of this theory was reviewed before, but here we arrive at the same results using the analytical relation between the density of states and the Lanczos coefficients (\ref{eq:intg_eqn_abrho}). In particular, for the GUE the density of states is given by the Wigner semicircle law
\be
\rho(E)=\frac{\sqrt{4-E^2}}{2\pi}\;.
\ee
Using this as the input to the integral equation we find the analytical solutions
\be 
a(x)=0\;,\,\,\,\,\,\,\,\,\,\,\,\,\, b(x)^2=(1-x)\;.
\ee
The leftmost plot in Fig.~(\ref{fig:ex_ab}) compares this result to numerically computed Lanczos coefficients for individual draws from the GUE ensemble.
The noisy fluctuations in this plot depict the actual Lanczos coefficients of specific instances of matrices in the corresponding ensemble, a computation that can be done using conventional stable routines to find the Hessenberg form of a matrix (see \cite{SpreadC} for details).

\begin{figure}[t]
\centering
\includegraphics[width=0.98\linewidth]{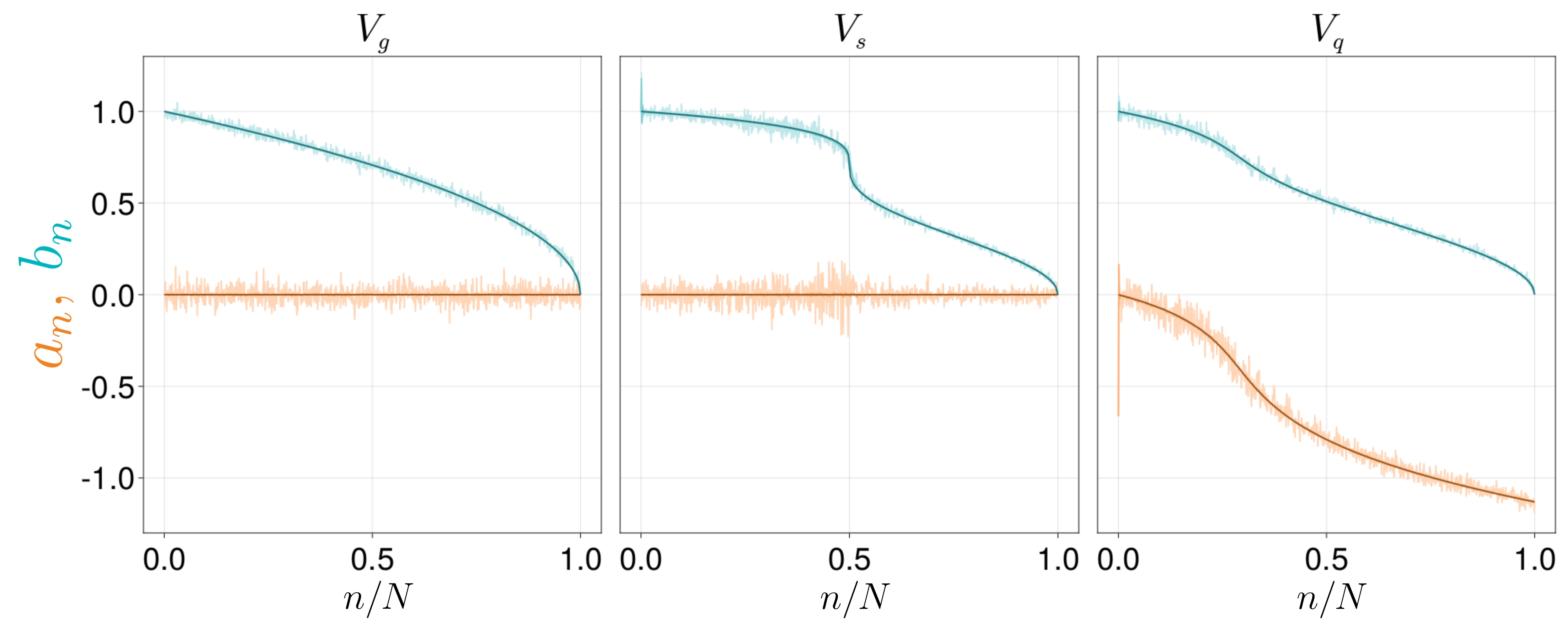}
\caption{Graph of the Lanczos coefficients $a(x),b(x)$ for one instance (light colors) and an average over $256$ instances (moderate colors) of size $N=1024$ random matrices with potentials $V_g$, $V_s$ and $V_q$ from left to right, along with the analytical solution for the average values (dark colors, continuous). The analytical solution overlaps well with the averaged values.
}
\label{fig:ex_ab}
\end{figure}

Next, consider a theory with a density of states
\be 
\rho(E)=\frac{\sqrt{4-E^2}}{\pi}\gr{\frac{7}{10}-\frac{3}{5}E^2+\frac{1}{5}E^4}\;.
\ee
Using the Coulomb Gas method as explained in \cite{https://doi.org/10.48550/arxiv.1510.04430}, the principal value integral 
\be
\frac{1}{4}V'(\omega)= \text{p.v.}\int dE\, \frac{\rho(E)}{\omega-E}\;,
\label{eq:pvintegral}
\ee
leads to the associated potential as
\be 
V_s(E)\equiv 3E^2-E^4+\frac{2}{15}E^6\;.
\ee
Using this density of states as the input to the integral equation (\ref{eq:intg_eqn_abrho}) we numerically obtain the average Lanczos coefficients using the algorithm discussed above. The results agree well with the average over the Lanczos coefficients computed for individual draws from the ensemble (middle plot in Fig~(\ref{fig:ex_ab})).

The previous two examples had densities of states that are even in $E$, giving vanishing $a$-type Lanczos coefficients as discussed above. To obtain non-trivial $a$ coefficients we consider a theory with a non-even density of states \be 
\rho(E)=\frac{\sqrt{4-E^2}}{\pi}\gr{\frac{1}{3}-\frac{1}{3}E+\frac{1}{6}E^2}\;.
\ee
Following (\ref{eq:pvintegral}),  this arises in a matrix model with a quartic potential including a cubic term
\be 
V_q(E)\equiv \frac{1}{6}E^4-\frac{4}{9}E^3+\frac{8}{3}E\;.
\ee
Using this density of states in the integral equation (\ref{eq:intg_eqn_abrho}) we again compute the average Lanczos coefficients. The results are in precise agreement with the ensemble average (right-hand plot in Fig.~(\ref{fig:ex_ab})).

\section{Statistics of the Lanczos coefficients.}


The starting point for studying the spectral statistics of a Random Matrix Theory is the joint probability distribution for the eigenvalues~(\ref{jointb}). In this section, we will similarly obtain the joint probability distribution of the Lanczos coefficients. We first recall from above that for a Gaussian random matrix these quantities are independent random variables~\eqref{indl}. Therefore, multiplying the probability distribution for each random variable, the joint  distribution of the Lanczos coefficients for the Gaussian generalized $\beta$-ensembles can be written as
\be
    p_{\textrm{Gaussian}}(a_0,\cdots ,a_{N-1},b_1,\cdots ,b_{N-1}) \propto \gr{\prod_{n=0}^{N-1} e^{-\beta N\frac{a_n^2}{4}}} \gr{\prod_{n=1}^{N-1} b_n^{(N-n)\beta-1} e^{-\beta N\frac{b_n^2}{2}}}\;.
\ee
This can be generalized to random matrix ensembles with arbitrary potentials $V(H)$. As mentioned above these are defined by a modified measure
\be
    \frac{1}{Z_{\beta,N, V}}e^{-\frac{\beta N}{4}\textrm{Tr}(V(H))}\;.
\ee
Since $\textrm{Tr}(V(H))$ does not change under unitary transformations, the joint distribution of the Lanczos coefficients becomes
\bea
    p(a_0,\cdots ,a_{N-1},b_1,\cdots ,b_{N-1}) &\propto & p_{\textrm{Gaussian}}(a_0,\cdots ,a_{N-1},b_1,\cdots ,b_{N-1})\, \frac{e^{-\frac{\beta N}{4}\textrm{Tr}(V(H))}}{e^{-\frac{\beta N}{4}\textrm{Tr}(H^2)}}\nonumber\\ &\propto & \gr{\prod_{n=1}^{N-1} b_n^{(N-n)\beta-1}}\,e^{-\frac{\beta N}{4}\textrm{Tr}(V(H))}\;.\label{eq:joint_prob_dist}
\eea
Here we used the fact that $\textrm{Tr}(H^2)$ for the triagonalized Hamiltonian is just $\sum_n (a_n^2 + 2 b_n^2)$.
Equivalently, the Jacobian of the coordinate transformation from the original form of the matrix to its tridiagonal form via the Lanczos procedure is
\be 
J\propto  \prod_{n=1}^{N-1} b_n^{(N-n)\beta-1}\;.\label{eq:Jacobian}
\ee
Since this coordinate change applies to single realizations of the random matrix, once we find it for one example, say the Gaussian ensemble, it will be the same for any other distribution. The same happens when we change variables to the eigenvector basis, with the universal appearance of the Vandermonde determinant for any given potential.

This Jacobian can also be obtained directly by finding the volume of the space of matrices that correspond to a particular set of Lanczos coefficients. Recall that every step of the Householder transformation procedure of tridiagonalizing a matrix converts matrices as follows:
\eqm{\begin{pmatrix}
\ddots&\ddots&&\\
\ddots&a_{n-1}&b_n&\\
&b_n&a_n & x^{T} \\
&&x & H_{N-n-1} 
\end{pmatrix}\to \begin{pmatrix}
\ddots&\ddots&&\\
\ddots&a_{n-1}&b_n&\\
&b_n&a_n & b_{n+1}e_1^T \\
&&b_{n+1}e_1 & H_{N-n-1}'
\end{pmatrix}\;.
}
At every step, $(N-n-1)\beta$ independent real numbers corresponding to $\beta$ parameters for each of the $(N-n-1)$ entries of $x$ are collapsed into their magnitude, $b_{n+1}$. The space of entries corresponding to the same $b_{n+1}$ are those on a surface of a $(N-n-1)\beta$ dimension sphere with radius $b_{n+1}$, giving us a factor of $b_{n+1}^{(N-n-1)\beta-1}$ in the volume. Multiplying the volume factor due to every step of the transformation, we find that a total volume proportional to 
\be
\prod_{n=1}^{N-1} b_{n}^{(N-n)\beta-1}\;,
\ee
in the original space of matrices corresponds to the same tridiagonal matrix after the Lanczos procedure.

For a third approach, we can follow the results of \cite{Dumitriu_2002}, where the Jacobian that takes us from the tridiagonal form of GOE to the eigenvalue form was found to be
\be 
J_{T\rightarrow \lambda}\propto\frac{\Delta}{\prod_{n=1}^{N-1} b_n^{N-n-1}}\;,
\ee
where $\Delta$ is the Vandermonde determinant. The proof of the form of that Jacobian does not depend on the specific potential we assume for the orthogonal matrices. Therefore we can invert that Jacobian and multiply it by the Jacobian that takes us from the original matrix form to the eigenvalue form, namely the Vandermonde determinant $\Delta$. We again arrive at the result \eqref{eq:Jacobian}.

\subsection{Saddle point approach to the one-point function}

At large $N$, the probability distribution of the Lanczos coefficients \eqref{eq:joint_prob_dist} becomes peaked around its average value, so we may use a saddle point approximation to find the average and covariance of the Lanczos coefficients. The logarithm of the probability from the exponent of \eqref{eq:joint_prob_dist} is given by 
\be
S_{\textrm{eff}}\equiv \ln p(a_0,\ldots, a_{n-1},b_1,\ldots,b_{N-1}) =\sum_n \gr{(N-n)\beta-1}\ln b_n - \frac{\beta N}{4}\textrm{Tr}(V(H))\;.
\ee
The average coefficients, or one-point functions, are those that maximize $S_\textrm{eff}$.  We can find them by taking the gradient and setting it to zero. In this sense, the function $S_{eff}$ plays the role of an effective action for the Lanczos coefficients.

We want to take derivatives of $\textrm{Tr}(V(H))$ with respect to the Lanczos coefficients and to evaluate them at the average of the distribution.  To evaluate derivatives with respect to a given $a_i$ or $b_i$ we will see that it is convenient to first expand the potential as a polynomial around the average value of $a_i$, namely $V(E)=\sum_n w_n(E-\bar{a}_i)^n$. First, we recall the  matrix identity
\be
\frac{\dx}{\dx A_{ij}} \textrm{Tr}(A^n)=n\sq{A^{n-1}}_{ji}\;,
\ee
Defining $\bar{a}_i$ to be the average value of $a_i$, it follows from this identity that
\be
\frac{\dx}{\dx H_{ij}} \textrm{Tr}((H-\bar{a}_iI)^n)=n\sq{(H-\bar{a}_iI)^{n-1}}_{ji}\;,
\ee
where $I$ is the identity matrix. As discussed in the previous section, when $i,j$ are close to $k$, the same continuity assumption that we used to derive \eqref{eq:local_approx} implies that the average Hamiltonian is then  approximated by
\be \label{eq:local_approx_2}
\sq{(\bar{H}-\bar{a}_kI)^n}_{ij}\approx \sq{T(0,\bar{b}_k)^n}_{ij}\;,
\ee
where $T(a,b)$ was defined above in~(\ref{Triav}) and $I$ is the identity matrix. Also note that, as before, the approximation is not accurate when $i,j$ are $O(1)$ in the large $N$ limit. Next, we notice that
\bea
    \frac{\dx}{\dx b_i} \textrm{Tr}((H-\bar{a}_i)^n)
    &=&2 \frac{\dx}{\dx H_{i-1,i}} \textrm{Tr}((H-\bar{a}_i)^n)
    = 2n\sq{(H-\bar{a}_iI)^{n-1}}_{i,i-1}\;,\\
    \frac{\dx}{\dx a_i} \textrm{Tr}((H-\bar{a}_i)^n)
    &=&\frac{\dx}{\dx H_{i,i}} \textrm{Tr}((H-\bar{a}_i)^n)
    =n\sq{(H-\bar{a}_iI)^{n-1}}_{i,i}\;.
\eea
We now evaluate at the average of the tridiagonal Hamiltonian and use \eqref{eq:trid_to_binomial} to arrive at
\bea
    \ev{\frac{\dx}{\dx b_i} \textrm{Tr}((H-\bar{a}_i)^n)}_{H=\bar{H}}&=& 2n\sq{T(0,\bar{b}_k)^{n-1}}_{i,i-1}= 2n\bar{b}_i^{n-1}\binom{n-1}{n/2}\;,\\
    \ev{\frac{\dx}{\dx a_i} \textrm{Tr}((H-\bar{a}_i)^n)}_{H=\bar{H}}&=&n\sq{T(0,\bar{b}_k)^{n-1}}_{i,i}=
    n\bar{b}_i^{n-1}\binom{n-1}{(n-1)/2}    \;.
\eea
Here we see why it was convenient to expand the potential around $\bar{a}_1$: this allowed us to exploit the binomial identity for powers of $T(0,\bar{b}_k)$.

We now come back to the generic potential written as $V(E)=\sum_n w_n(E-\bar{a}_i)^n$. Using the integral identities for the binomial coefficients in \eqref{eq:trid_integral} we can write the derivatives as follows
\bea
    \ev{\frac{\dx}{\dx b_i} \textrm{Tr}(V(H))}_{H=\bar{H}}&=&2\sum_n n w_n \bar{b}_i^{n-1}\binom{n-1}{n/2} = \ev{\gr{\frac{\dx}{\dx b_i}\int_{-2}^2 dx\,\frac{V(a_i+b_ix)}{\pi\sqrt{4-x^2}}}}_{H=\bar{H}}\;,
    \nonumber\\
      \ev{\frac{\dx}{\dx a_i} \textrm{Tr}(V(H))}_{H=\bar{H}}&=&\sum_n n w_n \bar{b}_i^{n-1}\binom{n-1}{(n-1)/2} =\ev{\gr{\frac{\dx}{\dx a_i}\int_{-2}^2 dx\,\frac{V(a_i+b_ix)}{\pi\sqrt{4-x^2}}}}_{H=\bar{H}}\;,\label{eq:one_point_vdiffs}
\eea
where we used the observations that 
\bea
\frac{\dx}{\dx b_i}V(a_i+b_ix)\vert_{H=\bar{H}} &=&xV'(a_i+b_ix)\vert_{H=\bar{H}} =\sum_n  n w_n \bar{b}_i^{n-1} x^{n}
\;,\\
\frac{\dx}{\dx a_i}V(a_i+b_ix)\vert_{H=\bar{H}} &=&V'(a_i+b_ix)\vert_{H=\bar{H}} =\sum_n  n w_n \bar{b}_i^{n-1}x^{n-1}
\;.
\eea
Changing variables to $E=a_i+b_i x$, we finally obtain
\bea
    \frac{\dx}{\dx a_i} \textrm{Tr}(V(H)) &=& \frac{\dx}{\dx a_i}\int dE\,\frac{V(E)}{\pi\sqrt{4b_i^2-(E-a_i)^2}}\;,\nonumber\\
    \frac{\dx}{\dx b_i} \textrm{Tr}(V(H)) &=& \frac{\dx}{\dx b_i}\int dE\,\frac{V(E)}{\pi\sqrt{4b_i^2-(E-a_i)^2}}\;.\label{eq:one_point_simplified}
\eea
where the limits of the integration in the energy are fixed by those of $x$.
These last expressions can also be derived using the integral equation relating the density of states and the Lanczos coefficients~(\ref{intdl}).\footnote{In more detail, we can substitute the integral expression for the density of states (\ref{intdl}) into ${\rm Tr}(V(H)) = \int dE \, \rho(E) V(E)$.  The integral expression for $\rho$ involves a sum over all the Lanczos coefficients.  Taking the derivative as in (\ref{eq:one_point_simplified}) pulls out one term in the sum.} Conversely, this derivation provides a different path to the integral equation. 

Integrating both sides of (\ref{eq:one_point_simplified}) gives an integral expression for ${\rm Tr}(V(H))$.  Thus, the effective action can be written as
\be
  S_{\textrm{eff}} = \sum_n \gr{(N-n)\beta-1}\ln b_n - \frac{\beta N}{4} \sum_n \int dE \frac{V(E)}{\pi \sqrt{4b_n^2 - (E-a_n)^2}}\;.
\ee
Writing the Lanczos coefficients as a function of $x=n/N$ (not to be confused with the $x$ in (\ref{eq:one_point_vdiffs})),  namely $a(x)$ and $b(x)$, we can rewrite the effective action in the large-$N$ limit as
\be
  \frac{S_{\textrm{eff}}}{\beta N^2} = \int dx\,(1-x)\ln b(x) - \frac{1}{4} \int dx\int dE \frac{V(E)}{\pi \sqrt{4b(x)^2 - (E-a(x))^2}}\;.
\ee
This form of the effective action (the logarithm of the joint probability distribution of the Lanczos coefficients) shows that the nice variables in the large-$N$ limit are $a(x)$ and $b(x)$, as assumed in the previous section. To maximize the probability we just need to find the extrema of this action. We obtain the following coupled equations for the Lanczos coefficients
\bea
    4(1-x) &=& b(x) \frac{\dx}{\dx b(x)}\gr{\int dE \frac{V(E)}{\pi \sqrt{4b(x)^2 - (E-a(x))^2}}}\;,\nonumber\\
    0&=&\frac{\dx}{\dx a(x)}\gr{\int dE \frac{V(E)}{\pi \sqrt{4b(x)^2 - (E-a(x))^2}}}\;.\label{eq:one_point_final}
\eea
If $V(E)=\sum_n w_nE^n$ is a polynomial, then the integrals also give polynomials, resulting in a system of algebraic equations in $a,b,x$. This system reads
\bea
    4(1-x) &=& \sum_n w_n\sum_m m a^{n-m}b^{m-1}\binom{n}{m}\binom{m}{m/2} \;,\nonumber\\
    0&=&\sum_n w_n\sum_m (n-m)a^{n-m-1}b^{m}\binom{n}{m}\binom{m}{m/2}\;.\label{eq:one_point_poly}
\eea
These equations are consistent with the generic equations derived in Ref.\cite{doi:10.1063/1.533010,PhysRevD.47.1640} for the thermodynamic limit of the Lanczos method. We have arrived at them in a simpler manner, using conventional saddle point techniques in the context of Random Matrix Theory.

In the previous equation we can see that if $V(E)$ is an even polynomial, then the second equation can be solved with $a=0$, since the only terms that contribute have even $m$ so every term in the sum has a factor of $a$. This allows us to write $x$ as a polynomial of $b$.

We now verify that the solutions to the integral equation in the examples considered previously can also be obtained by solving this system of equations. For the Gaussian case, namely $V_g(E)=E^2$ the equations are just
\be 
a(x)=0\;,\,\,\,\,\,\,\,\,\,\,\,\,\, b(x)^2=(1-x)\;.
\ee
For the second potential $V_s(E)=3E^2 - E^4 + \frac{2}{15}E^6$ we find
\be 
a=0\;,\,\,\,\,\,\,\,\,\,\,\,\,\,  3b^2-6b^4+4b^6=(1-x)\;.
\ee
This algebraic equation can be solved numerically and the solution precisely matches the result coming from the integral equation \eqref{intdl} and displayed in Fig.~\ref{fig:ex_ab}. For the last case, namely $V_q(E)= \frac{1}{6}E^4 - \frac{4}{9}E^3 + \frac{8}{3}E$, we observe that the Lanczos coefficients obey the following equations 
\bea
0&=&\frac{1}{24}(4a^3+24ab^2)-\frac{1}{9}(3a^2+6b^2)+\frac{2}{3}\;,\\
(1-x)&=&\frac{1}{24}(24a^2b^2+24b^4)-\frac{1}{9}(12ab^2)\;.
\eea
This system of equations has multiple solutions $a,b$ at some values of $x$, but choosing the solution with the smallest $b$, we find the numerical solution again coincides precisely with the one obtained by solving the integral equation relating the Lanczos coefficients and the density of states, and is shown in Fig.~\ref{fig:ex_ab}.

These results establish a connection between solutions of the integral equation for the Lanczos coefficients, and algebraic relations arising from the saddlepoint conditions. It would be interesting to better understand the origin of this connection.

\subsection{Saddle point approach to the two-point function}

To find the covariance of the Lanczos coefficients, or, equivalently, their two-point function, we approximate the probability distribution as a Gaussian around its peak. This is equivalent to expanding the effective action to quadratic order around the saddle point derived above.\footnote{In physics language, we are expanding the action around the classical solution.} We thus seek an expansion
\be
    S_{eff}(a,b) = S_{eff}(\bar{a},\bar{b})+\Delta S_{\textrm{eff}}\;,
\ee
where 
\be 
\Delta S_{\textrm{eff}}\equiv -\frac{1}{2}\,(\, \delta a_i\,M_{ij}^{aa}\,\delta a_i+2\delta a_i\,M_{ij}^{ab}\,\delta b_i+\delta b_i\,M_{ij}^{bb}\,\delta b_i\,)\;,
\ee
and we have defined the Gaussian kernels
\bea
M_{ij}^{aa}&=&\frac{\beta N}{4}\frac{\dx^2}{\dx a_i \dx a_j}\textrm{Tr}(V(H))\;,\nonumber\\
M_{ij}^{bb}&=&\frac{\beta N}{4}\frac{\dx^2}{\dx b_i \dx b_j}\textrm{Tr}(V(H))-\frac{\dx^2}{\dx b_i \dx b_j}\sum_n ((N-n)\beta-1)\ln b_n\;,\nonumber\\
M_{ij}^{ab}=M_{ji}^{ba}&=&\frac{\beta N}{4}\frac{\dx^2}{\dx a_i \dx b_j}\textrm{Tr}(V(H))\;.\label{eq:M_definitions}
\eea
The two-point functions of the Lanczos coefficients are computed from the inverse of these kernels.

The first step is to compute the second derivative of $\textrm{Tr}(V(H))$ with respect to the Lanczos coefficients. As above, we do this by expanding $V(H)$ as a polynomial and then expressing the sum in an alternate form. For a general matrix $H$, a standard identity gives the second derivative of $\textrm{Tr}\,H^n$ with respect to the matrix entries $H_{ij}$ as
\be
    \frac{\dx^2}{\dx H_{ij} \dx H_{kl} } \textrm{Tr}H^n= n \sum_{m=0}^{n-2} \gr{H^{m}}_{li}\gr{H^{n-2-m}}_{jk}\;.
    \label{eq:SecondDerivIdent}
\ee
We want to apply this formula in the basis in which $H$ is tridiagonal, and then evaluate the result at the average value of the Lanczos coefficients. 

First notice that, using (\ref{eq:SecondDerivIdent}), 
\bea
    \frac{\dx^2}{\dx a_i \dx a_j } \textrm{Tr}(H-\bar{a}_k)^{n}
    &=& \frac{\dx^2}{\dx H_{i,i} \dx H_{j,j} } \textrm{Tr}(H-\bar{a}_k)^{n} = n \sum_{m=0}^{n-2} \sq{(H-\bar{a}_k)^{m}}_{j,i}\sq{(H-\bar{a}_k)^{n-2-m}}_{i,j}\;,\nonumber
    \\
    \frac{\dx^2}{\dx a_i \dx b_j } \textrm{Tr}(H-\bar{a}_k)^{n} 
    &=& 2\frac{\dx^2}{\dx H_{i,i} \dx H_{j-1,j} } \textrm{Tr}(H-\bar{a}_k)^{n} = 2n \sum_{m=0}^{n-2} \sq{(H-\bar{a}_k)^{m}}_{ji}\sq{(H-\bar{a}_k)^{n-2-m}}_{i,j-1}\;,\nonumber\\
    \frac{\dx^2}{\dx b_i \dx b_j } \textrm{Tr}(H-\bar{a}_i)^{n}  &=& 2\frac{\dx^2}{\dx H_{i-1,i} \dx H_{j-1,j} } \textrm{Tr}(H-\bar{a}_k)^{n}+2\frac{\dx^2}{\dx H_{i,i-1} \dx H_{j-1,j} } \textrm{Tr}(H-\bar{a}_k)^{n}\nonumber\\&=&
    2n \sum_{m=0}^{n-2} \sq{(H-\bar{a}_k)^{m}}_{j,i-1}\sq{(H-\bar{a}_k)^{n-2-m}}_{i,j-1}\nonumber\\
    &&+2n \sum_{m=0}^{n-2} \sq{(H-\bar{a}_k)^{m}}_{j,i}\sq{(H-\bar{a}_k)^{n-2-m}}_{i-1,j-1}\;.
\eea
As we did for the one-point functions, we evaluate these expressions at the average value of the Lanczos coefficients in the large $N$ limit. In this limit, choosing $k$ close to $i,j$ allows us to use \eqref{eq:local_approx_2} and \eqref{eq:trid_to_binomial}. We get
\bea
    \ev{\frac{\dx^2}{\dx a_i \dx a_j } \textrm{Tr}(H-\bar{a}_i)^{n}}_{H=\bar{H}} &\approx&n \sum_{m=0}^{n-2} \sq{T(0,\bar{b}_k)^{m}}_{ji}\sq{T(0,\bar{b}_k)^{n-2-m}}_{ij}
    = n\bar{b}_k^{n-2} C^{n-2}_{j-i,j-i}\;,\nonumber\\
    \ev{\frac{\dx^2}{\dx a_i \dx b_j } \textrm{Tr}(H-\bar{a}_i)^{n}}_{H=\bar{H}} &\approx& 2n \sum_{m=0}^{n-2} \sq{T(0,\bar{b}_k)^{m}}_{ji}\sq{T(0,\bar{b}_k)^{n-2-m}}_{i,j-1}
    =2n \bar{b}_k^{n-2} C^{n-2}_{j-i,j-i-1} \;,\nonumber\\
    \ev{\frac{\dx^2}{\dx b_i \dx b_j } \textrm{Tr}(H-\bar{a}_i)^{n}}_{H=\bar{H}} &\approx&
    2n \sum_{m=0}^{n-2} \sq{T(0,\bar{b}_k)^{m}}_{j,i-1}\sq{T(0,\bar{b}_k)^{n-2-m}}_{i,j-1}\nonumber\\
    &&+2n \sum_{m=0}^{n-2} \sq{T(0,\bar{b}_k)^{m}}_{j,i}\sq{T(0,\bar{b}_k)^{n-2-m}}_{i-1,j-1}\nonumber\\
    &=&2n\bar{b}_k^{n-2} \gr{C^{n-2}_{j-i+1,j-i-1} +C^{n-2}_{j-i,j-i}}\;,\label{eq:trace_second_derivatives}
\eea
where we have defined
\be
C_{\alpha,\beta}^n \equiv \sum_{k=0}^{n} \binom{k}{(k+\alpha)/2}\binom{n-k}{(n-k+\beta)/2}\;.
\ee
As before we always assume the binomials are zero when their arguments are not integers.

In appendix \ref{appendix_math}, we prove some properties satisfied by the coefficients $C_{\alpha,\beta}^n$. In particular,  $C_{\alpha,\beta}^n=C^n_{0,\alpha+\beta}$ for $\alpha,\beta\geq 0$, so we will write everything in terms of $C_{\delta}^n\equiv C_{0,\delta}^n$. Using this relation, and choosing $k=i$ and $j=i+\delta$ we can rewrite the second derivatives \eqref{eq:trace_second_derivatives} as
\bea\label{third}
    \ev{\frac{\dx^2}{\dx a_i \dx a_{i+\delta} } \textrm{Tr}(H-\bar{a}_i)^{n}}_{H=\bar{H}} &\approx& n\bar{b}_i^{n-2} C^{n-2}_{\abs{2\delta}}\;,\nonumber\\
    \ev{\frac{\dx^2}{\dx a_i \dx b_{i+\delta} } \textrm{Tr}(H-\bar{a}_i)^{n}}_{H=\bar{H}} &\approx& 2n \bar{b}_i^{n-2} C^{n-2}_{\abs{2\delta-1}}\;, \nonumber\\
    \ev{\frac{\dx^2}{\dx b_i \dx b_{i+\delta} } \textrm{Tr}(H-\bar{a}_i)^{n}}_{H=\bar{H}} &\approx& 2n\bar{b}_i^{n-2} \gr{C^{n-2}_{\delta+1,\delta-1} +C^{n-2}_{\abs{2\delta}}}\;.
\eea
From now on, we rename $\bar{a}_i\rightarrow a_i$ and $\bar{b}_i\rightarrow b_i$; there will be no ambiguity because we will be evaluating all quantities at the average values of the Lanczos coefficients.

In the third equation of~(\ref{third}), when $\delta\neq 0$, we can simplify $C^{n-2}_{\delta+1,\delta-1}=C^{n-2}_{\abs{2\delta}}$, but $\delta=0$ needs to be treated as a special case since $C^{n-2}_{1,-1}=C^{n-2}_{2}$. Ignoring that special case for now, if we expand $V(E)=\sum_n w_n (E-\bar{a}_i)^n$, we can compute the Gaussian kernels $M$ from \eqref{eq:M_definitions} as
\bea
    M_{i,i+\delta}^{aa}&\approx&\frac{\beta N}{4}\sum_n nw_n b_i^{n-2} C^{n-2}_{\abs{2\delta}} \;,\nonumber\\
    M_{i,i+\delta}^{bb}&\approx&\frac{\beta N}{4}\sum_n 4nw_n b_i^{n-2} C^{n-2}_{\abs{2\delta}}\;, \nonumber\\
    M_{i,i+\delta}^{ab}=M_{i+\delta,i}^{ba}&\approx&\frac{\beta N}{4}\sum_n 2nw_n b_i^{n-2} C^{n-2}_{\abs{2\delta-1}}\;.\label{eq:M_expanded}
\eea
Here we have used that $-\frac{\dx^2}{\dx b_i \dx b_{i+\delta}}\sum_n ((N-n)\beta-1)\ln b_n=0$ when $\delta\neq0$.  This is an explicit algebraic expression for the Gaussian kernels associated with any polynomial potential.

In the $\delta=0$ case, we have 
\bea
    M_{i,i}^{bb}&\approx&\frac{\beta N}{4}\sum_n 2nw_n b_i^{n-2} \gr{C^{n-2}_{2}+C^{n-2}_{0}}+ \frac{(N-i)\beta-1}{b_i^2}\;.
    \label{eq:Delta0Eq}
\eea
This can be further simplified. From \eqref{pascalconv_recursionrelations}, we derive that 
\be
    C_2^n=C_{0}^n-\frac{1}{2}\binom{n+2}{(n+2)/2}\;.
\ee
Meanwhile, we can use the saddle point equations (\ref{eq:one_point_vdiffs},\ref{eq:one_point_simplified},\ref{eq:one_point_final}), to rewrite the last quantity in  (\ref{eq:Delta0Eq}) in terms of 
\be
    4\frac{(N-i)\beta-1}{\beta N} \approx  4(1-x) \approx 2b_i\sum_n n w_n b_i^{n-1}\binom{n-1}{n/2}=\sum_n n w_n b_i^{n}\binom{n}{n/2}\;.
\ee
The second expression follows by defining $x = i/N$ and taking the large $N$ limit.  The last expression follows by using Pascal's identity for binomials. Therefore
\bea
    M_{i,i}^{bb}&\approx&\frac{\beta N}{4}\sum_n 4nw_n b_i^{n-2} C^{n-2}_{0}-\frac{\beta N}{4}\sum_n nw_n b_i^{n-2}\binom{n}{n/2}+\frac{\beta N}{4}\frac{4(1-x)}{b_i^2}\nonumber\\
    &=&\frac{\beta N}{4}\sum_n 4nw_n b_i^{n-2} C^{n-2}_{0}\;.
\eea
So \eqref{eq:M_expanded} is true for $\delta=0$ as well.

These expressions determine the Gaussian kernel for polynomial potentials $V(E)$. In the interest of arriving at an analytical treatment for general $V(E)$ (or general density of states), we can convert these sums to integral expressions in the large $N$ limit. To do so, we use results from appendix~(\ref{appendix_math}). In particular, as shown in Eqs.~(\ref{eq:rhom_definition} and~(\ref{eq:Pm_definition}),
we can write the $C_m^n$ as  moments $\int x^n \, \eta_m(x) \, dx$ of
\bea
    \eta_m(x) &=& \frac{1}{2}\delta(x-2)+(-1)^m\frac{1}{2}\delta(x+2)+\frac{1}{2}\frac{x P_m(x)}{\pi \sqrt{4-x^2}}H(4-x^2)\;,\nonumber \\
    P_m(x)&=&-\frac{\sin\gr{m\cos^{-1}\gr{\frac{x}{2}}}}{\sin\gr{\cos^{-1}\gr{\frac{x}{2}}}}\;.
    \label{eq:EtaAndP}
\eea
In what follows, we will focus on the correlations of $a_i, b_i$ with $a_j, b_j$ when  $|i-j| = \delta$ with $\delta/N \to 0$ in the large $N$ limit.  We focus on this regime because the correlations fall off with $\delta$ in the large $N$ limit.
Then, in terms of the functions (\ref{eq:EtaAndP}) the Gaussian kernels become
\bea
    M_{i,i+\delta}^{aa}&\approx&\frac{\beta N}{4}\int dx~\sum_n n\,w_n \,(b_ix)^{n-2}\,\eta_{\abs{2\delta}}(x)
    \nonumber\\
    &=&\frac{\beta N}{4}\int dE~\frac{V'(E)-V'(a_i)}{b_i(E-a_i)}\,\eta_{\abs{2\delta}}\gr{\frac{E-a_i}{b_i}}\;,\nonumber\\
    M_{i,i+\delta}^{bb}&\approx&\beta N\int dE~\frac{V'(E)-V'(a_i)}{b_i(E-a_i)}\,\eta_{\abs{2\delta}}\gr{\frac{E-a_i}{b_i}} \;,\nonumber\\
    M_{i,i+\delta}^{ab}=M_{i+\delta,i}^{ba}&\approx&\frac{\beta N}{2}\int dE~\frac{V'(E)-V'(a_i)}{b_i(E-a_i)}\,\eta_{\abs{2\delta-1}}\gr{\frac{E-a_i}{b_i}}\;.\label{eq:M_expanded_2}
\eea
Again, due to the continuity of the Lanczos coefficients in the large $N$ limit, and since $M^{aa}_{ij}$ is approximated by a function of $a_i,b_i$ that does not depend explicitly on $i$, we can make the approximation that for small $\delta$, $M_{ij}^{aa}\approx M_{i+\delta,j+\delta}$. This also holds for $M^{ab}_{ij},M^{bb}_{ij}$ and can be verified at numerically.

Finally, we want to invert the Gaussian kernel in the vicinity of some Lanczos index $i$.
To this end, rather than writing the full Gaussian kernel $M$ as a block matrix separating the $a$'s and $b$'s, we index $M$ as follows:
\be
    M_{2\alpha,2\beta}=M^{aa}_{i+\alpha,i+\beta}\;,\,\,\,\,\,\,\,\,\,\,\,\,\,\,\, M_{2\alpha-1,2\beta-1}=M^{bb}_{i+\alpha,i+\beta}\;,\,\,\,\,\,\,\,\,\,\,\,\,\,\,\, M_{2\alpha,2\beta-1}=M^{ab}_{i+\alpha,i+\beta}\;,
\ee
where $\alpha/N, \beta/N \to 0$ in the large $N$ limit. Then $M$ is the Gaussian kernel for both $a$ and $b$ Lanczos coefficients in the vicinity of the index $i$.  $M$ can be written as a product of matrices $SM'S$ where
\be\label{SMp}
    S_{pq}=\begin{cases}
    \delta_{pq},& p\;\text{even}\\
    2\delta_{pq},& p\;\text{odd}\\
    \end{cases},\;\;\;\;\;\;\;\;\;\;\;\; M'_{pq}=\int dE~\frac{V'(E)-V'(a_i)}{b_i(E-a_i)}\,\eta_{|p-q|}\gr{\frac{E-a_i}{b_i}}\;.
\ee
Notice that $M'$ is implicitly labeled by the Lanczos index $i$.
The matrix $S$ is diagonal, so its inverse is simple. The matrix $M'$ is  Toeplitz,  i.e., its entries are functions only of the distance to the diagonal $\vert p-q\vert$
in the large $N$ limit.
Thus the eigenvectors of $M'$ are plane waves $e^{ikq}$.  We can then compute its eigenvalues via
\bea
    \lambda_{M'}(k) &=& \sum_q \int dE~\frac{V'(E)-V'(a_i)}{b_i(E-a_i)}\,\eta_{\abs{q}}\gr{\frac{E-a_i}{b_i}}e^{ikq}\nonumber \\
    &=&\int dE~\frac{V'(E)-V'(a_i)}{b_i(E-a_i)}\,\eta\gr{\frac{E-a_i}{b_i},e^{ik}}\;.\label{eq:eig_defs}
\eea
Here we took the inner product of the matrix $M'$ with the eigenvector and its conjugate, and used the fact that the matrix is Toeplitz to reduce one integral to a delta function. In the second line, the sum over $q$ is brought into the integral via the function
\bea
    \eta(x,t)= \sum_{q=-\infty}^{\infty} \eta_q(x)\,t^q=\frac{x}{\pi\sqrt{4-x^2}}\frac{1}{t+\frac{1}{t}-x}\;,
\eea
derived in the appendix~(\ref{appendix_math}) (see Eq.~\ref{eq:rho_sum}).  We then invert $M'$ in the momentum basis and take the Fourier transform to find 
\be
M_{\alpha\beta}'^{-1}=\frac{1}{2\pi} \int_0^{2\pi}\frac{e^{ik(\alpha-\beta)}dk}{\lambda_{M'}(k)}\;.\label{eq:ift_integral}
\ee
Since the matrix $S$ defined in~(\ref{SMp}) is diagonal, this allows us to compute the inverse of the Gaussian kernel: 
\be
M^{-1}=S^{-1}\,M'^{-1}\,S^{-1}\;.
\ee
Thus, in conclusion, given $V(E)$, we may compute the integral \eqref{eq:eig_defs}, plug the resulting function into \eqref{eq:ift_integral}, and scale it by some factors of $2$ according to $S$ \eqref{SMp} to approximate the covariance of any two Lanczos coefficients in the vicinity of some index $i$, where by vecinity we mean any distance $\delta$ such  that $\delta/N$ dies in the large-N limit.

An immediate outcome of this computation is that, unlike the Gaussian case where the fluctuations of $a_n$ and $b_n$ are statistically independent, in a non-Gaussian random matrix theory $a_n$ and $b_n$ have non zero covariance. Another simple result is that since $S$ contains a factor of $2$ differentiating the entries associated with $a$'s and $b$'s, this equation implies the covariances of the Lanczos coefficients  satisfy
\be
  \overline{(a(x)-\bar{a}(x))(a(y)-\bar{a}(y))}=4\overline{(b(x)-\bar{b}(x))(b(y)-\bar{b}(y))}\;,
\ee
a relation valid in the large-N limit. A special case of this is the following relation between variances of the Lanczos coefficients
\be 
\sigma_{a(x)}^2=4\,\sigma_{b(x)}^2\;.
\ee

\subsubsection{Examples and numerical verification}
\label{sec:examples2}

We again consider the same three examples that we used to study the average Lanczos coefficients. The first of these was the GUE, defined by $V_g(E)=E^2$. The variance of the Lanczos coefficients can be obtained in this case from the exact tridiagonalization reviewed in a previous section. We now arrive at the same results using the saddle point approach.

For $V_g(E)=E^2$, eq.~\eqref{eq:M_expanded} gives us that the non-zero entries of $M$ are
\be
M^{aa}_{ii}=\frac{\beta N}{2}\;,\,\,\,\,\,\,\,\,\,\,\,\,\,\,M^{bb}_{ii}=2\beta N\;.
\ee
Thus the Lanczos coefficients are uncorrelated, with variance
\be 
\sigma_a^2=\frac{2}{\beta\, N}\;,\,\,\,\,\,\,\,\,\,\,\,\,\,\sigma^2_b=\frac{1}{2\beta\,N}\;.
\ee
This is consistent with the exact tridiagonalization in the large-N limit and compares well with the numerical evaluation in Fig~(\ref{fig:exnoise_ab}). Notice the factor of $4$ difference between the variances of $a$ and $b$.

\begin{figure}
\centering
\includegraphics[width=0.98\linewidth]{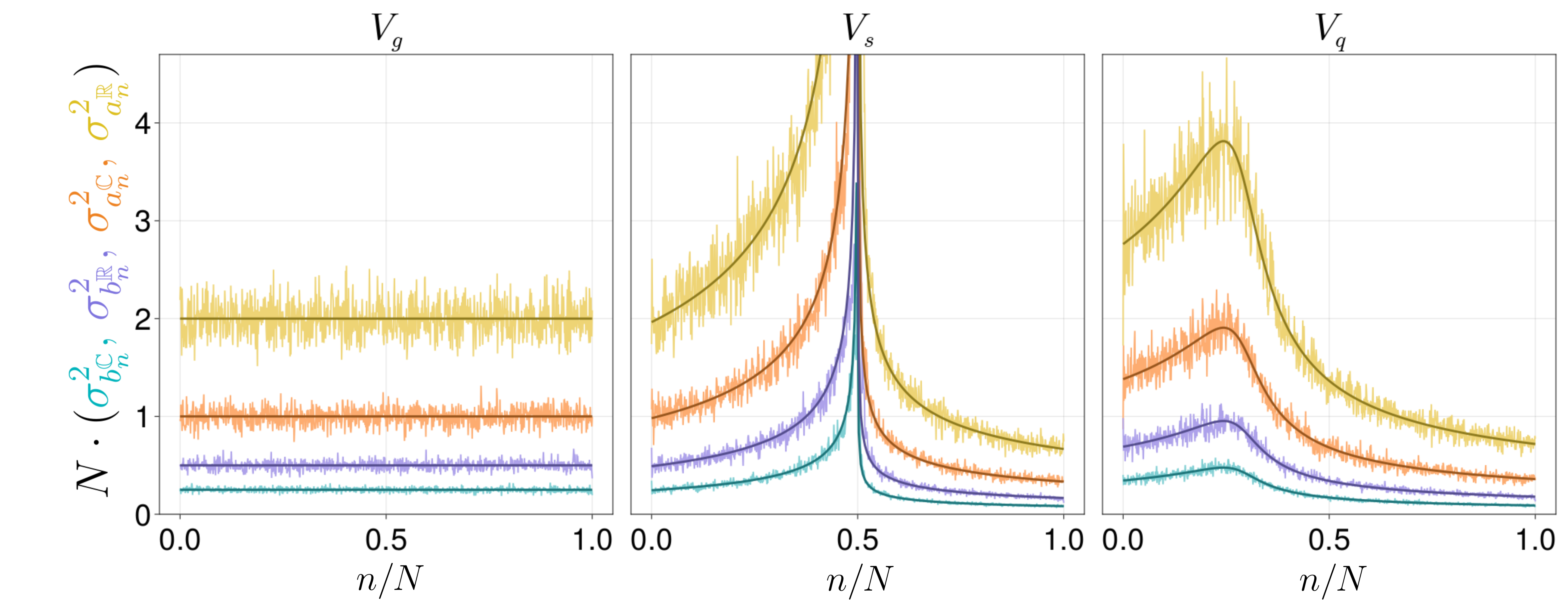}
\caption{Graph of the variance of $a(x),b(x)$ averaged over $256$ samples of $N=1024$ random matrices with potentials $V_g$, $V_s$ and $V_q$ from left to right (light colors), along with the analytical calculations of the variance (dark colors, continuous).
}
\label{fig:exnoise_ab}
\end{figure}

We now move to the potential
\be 
V_s(E)=3E^2-E^4+\frac{2}{15}E^6\;.
\label{eq:pot2}
\ee
In this case, \eqref{eq:M_expanded} gives   the only non-zero entries of $M$ as
\be
    M^{aa}_{i,i}=\beta N\gr{\frac{3}{2}-4b_i^2+\frac{16}{5}b_i^4}\;,\,\,\,\,\,\,\,\,\,\,\,\,\,\,M^{aa}_{i,i+1}=\beta N\gr{\frac{6}{5}b_i^4-b_i^2}\;,\,\,\,\,\,\,\,\,\,\,\,\,\,\, M^{aa}_{i,i+2}=\frac{\beta N}{5}b_i^4\;.
\ee
The values of $M^{bb}$ are four times the values for $M^{aa}$, and $M^{ab}=0$. The eigenvalues of $M'$ can be computed via
\be
    \lambda_{M'}(k)=M^{aa}_{i,i}+2M^{aa}_{i,i+1}\cos(2k)+2M^{aa}_{i,i+2}\cos(4k)\;.
\ee
Taking the inverse Fourier transform of the reciprocal as in \eqref{eq:ift_integral} gives us the entries of the inverse matrix. Numerically, one can take the matrix coefficients $M^{aa}_{i,i+\delta}$, pad it with an appropriate number of zeros for the desired accuracy, take the discrete Fourier transform via FFT (Fast Fourier Transform), take the reciprocals, and then invert the Fourier transform again via FFT to efficiently compute the entries of the inverse matrix.

Thus computing the eigenvalues directly leads to the two-point function. For the specific case of the variances, this result is plotted in the second panel of Fig~(\ref{fig:exnoise_ab}), along with the numerical evaluation of specific instances of random matrices generated with the potential (\ref{eq:pot2}). There is an excellent match between the analytic prediction and the average over this ensemble of random matrices.

As mentioned before, the previous two examples have zero average $a$-type Lanczos coefficients because they contain only even powers of the energy. To obtain non-trivial $a(x)$ we again consider the  potential
\be 
V_q(E)=\frac{1}{6}E^4-\frac{4}{9}E^3+\frac{8}{3}E\;.
\ee
We can compute the non-zero values of $M$, but this time as a function of both $a_i$ and $b_i$:
\be
    M^{aa}_{i,i}=\beta N\gr{ \frac{2}{3}b_i^2+\frac{1}{2}a_i^2-\frac{2}{3}a_i}\;,\,\,\,\,\,\,\,\,\,\,\,\,\,\,M^{aa}_{i,i+1}=\frac{\beta N}{6}b_i^2\;,\,\,\,\,\,\,\,\,\,\,\,\,\,\, M^{ab}_{i,i}=2\beta N\gr{\frac{1}{2}a_i-\frac{1}{3}b_i}\;.
\ee
These expressions lead to the eigenvalues
\be
\lambda_{M'}(k)=M^{aa}_{i,i}+M^{ab}_{i,i}\cos(k)+2M^{aa}_{i,i+1}\cos(2k)\;.
\ee
The variance of the Lanczos coefficients can then be computed as above. The result is plotted in the third panel of Fig~(\ref{fig:exnoise_ab}).

\section{Spectral form factor and TFD spread complexity}

The previous sections have determined the average and covariance of the Lanczos coefficients of a generic random matrix model. Now we want to show that this information by itself is sufficient to replicate many aspects of the long-time dynamics of chaotic systems. In particular, we first concentrate on the spectral form factor (SFF) defined by
\be 
\textrm{SFF}=\frac{Z_{\beta-i\,t}\,Z^*_{\beta+it}}{Z_{\beta}^2}\;,
\ee
where $Z_\beta=\sum_i\,e^{-\beta\,E_i}$ is the partition function of a  Hamiltonian with eigenvalues $E_i$. The SFF is well known in the context of matrix models, see \cite{Guhr:1997ve}, and has been recently studied in relation to black hole dynamics, see \cite{Cotler:2016fpe}.

\begin{figure}
    \centering
    \includegraphics[width=0.98\linewidth]{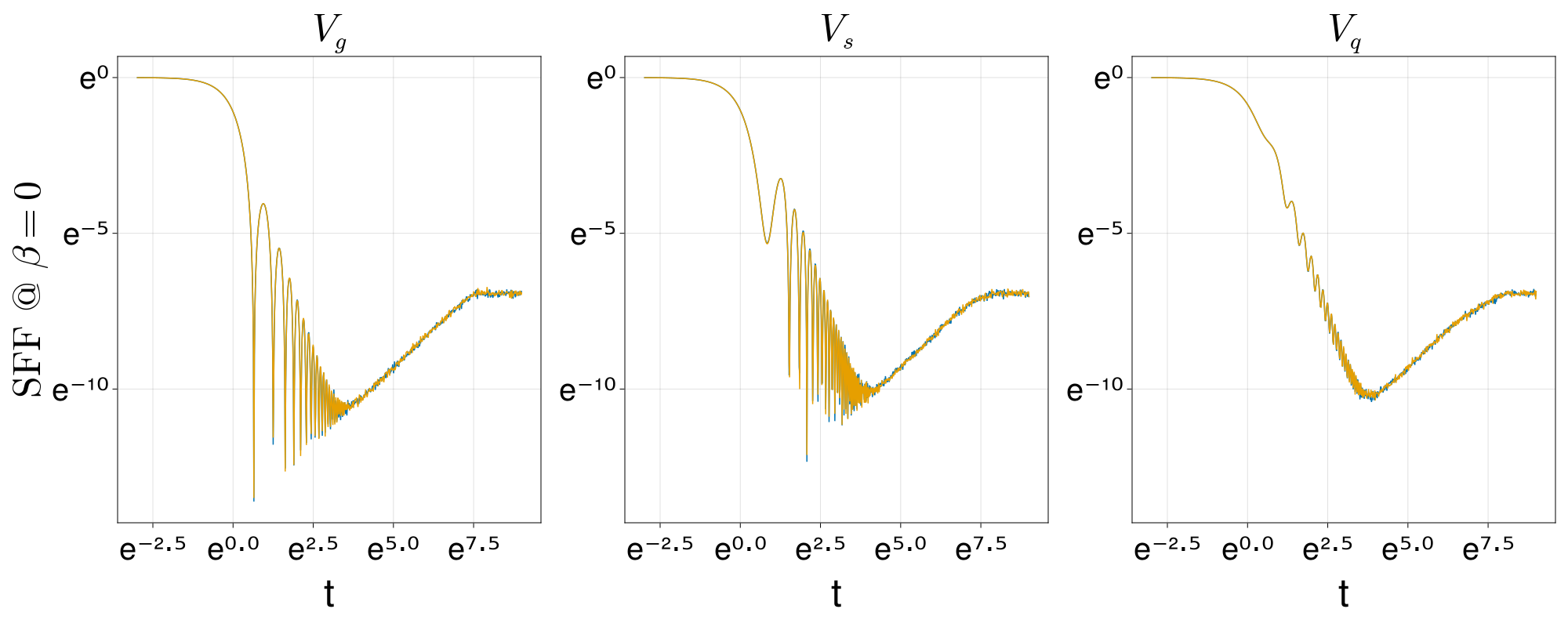}
    \caption{SFF for $N=1024$ random matrices distributed according to the potentials $V_g$, $V_s$, $V_q$, averaged over $256$ samples (in orange), as well as the average computed from $256$ samples based on the one and two-point functions of the Lanczos coefficients (in blue).}
    \label{fig:reverse_check_1}
\end{figure}

\begin{figure}
    \centering
    \includegraphics[width=0.98\linewidth]{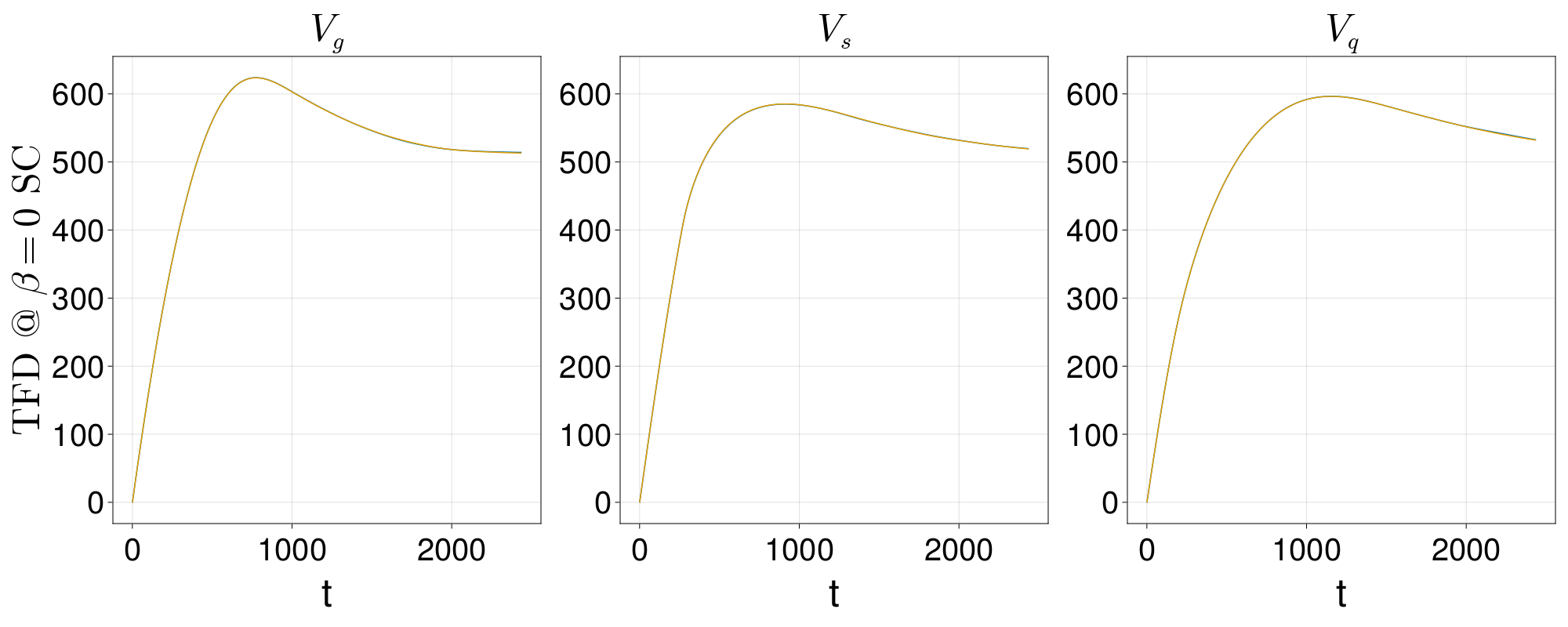}
    \caption{Spread complexity of the TFD state for $N=1024$ random matrices distributed according to the potentials $V_g$, $V_s$, $V_q$, averaged over $256$ samples (in orange), as well as the average computed from $256$ samples based on only the one and two-point functions of the Lanczos coefficients (in blue).}
    \label{fig:reverse_check_2}
\end{figure}

Second we consider the spread complexity proposed in \cite{SpreadC}, associated with the time evolution of the Thermo-Field Double (TFD) state.\footnote{The generic definition of spread complexity was reviewed above in~(\ref{LancSec}), see \eqref{eq:spread_comp_def}.} Concretely, the TFD state is defined as follows. Consider a Hamiltonian $H$, with eigenstates $\vert n \rangle$ and eigenvalues $E_n$. Then the TFD is defined by the state
\be 
\vert\psi_{\beta}\rangle \equiv\frac{1}{\sqrt{Z_{\beta}}}\sum_n e^{-\frac{\beta E_n}{2}}\vert n,n\rangle\;,
\label{TFDdef}
\ee
in the tensor product of the original Hilbert space with itself. Unitary evolution with a single Hamiltonian gives
\be 
\vert\psi_{\beta} (t)\rangle=e^{-iHt}\vert\psi_{\beta}\rangle=\vert\psi_{\beta+2it}\rangle\;.
\ee
This state has many applications. In the AdS/CFT correspondence, it is dual to the eternal black hole \cite{Maldacena_2003}. The crucial feature exploited in Ref.~\cite{SpreadC} is that the survival amplitude for the time evolved TFD state, i.e. the amplitude that time evolution leaves it unchanged, is
\be 
S(t) =\langle\psi_{\beta+2it}\vert\psi_{\beta}\rangle=\frac{Z_{\beta-it}}{Z_{\beta}}\;.
\label{eq:TFDsurvival}
\ee
The spectral form factor (SFF) is then the survival {\it probability} (i.e. the magnitude squared of (\ref{eq:TFDsurvival}) associated with the evolution of the TFD, a point that has been observed on a number of occasions (see, e.g. \cite{Papadodimas:2015xma,delCampo:2017bzr,Verlinde:2021jwu,Stanford:2022fdt}).  The authors of \cite{SpreadC} showed that this quantity controls the wavefunction of the time-evolved TFD in the Krylov basis at all times;  indeed, the entire wavefunction can be computed just from knowledge of the survival amplitude as a function of time.

Numerical methods can be used to compute the spectral form factor and spread complexity exactly for a given member of a matrix ensemble (see \cite{SpreadC}).  Repeated draws from the ensemble can then be used to compute the ensemble average. But we can also compute these quantities by using the one and two-point functions of the Lanczos coefficients to approximate the Hamiltonian of the theory. Even numerically, one advantage is that instead of requiring $O(N^2)$ matrix elements the approximation only requires $\mathcal{O}(N)$ matrix elements since it is tridiagonal. Once we have the approximation $H_{\textrm{trid}}^{\textrm{approx}}$ of $H_{\textrm{trid}}$\footnote{ By $H_{\textrm{trid}}$ we mean the exact Hamiltonian in the Krylov basis, i.e in the basis in which is tridiagonal.} we compute the SFF via
\be
    Z_\beta = \textrm{Tr}(e^{-\beta H})=\textrm{Tr}(e^{-\beta H_{\textrm{trid}}})\approx \textrm{Tr}(e^{-\beta H_{\textrm{trid}}^{\textrm{approx}}}),
\ee
and the spread complexity using the methods described in \cite{SpreadC}.

Equivalently, for a given RMT, we define a new theory of tridiagonal Hamiltonians defined by fixing the average and covariance of the diagonal and off-diagonal matrix elements. These are given by equations \eqref{eq:one_point_final} and \eqref{eq:M_expanded} respectively. To sample a Gaussian distribution with a known covariance matrix $M^{-1}$, we use the Cholesky decomposition to write the covariance as a product $M^{-1}=LL^T$. We then sample i.i.d gaussian variables $\xi_i$ with covariance $\overline{\xi_i \xi_j}=\delta_{ij}$; a linear transformation of these variables $x_i=L_{ij}\xi_j$  has covariance 
\be
\overline{x_ix_j}=L_{ik}L_{jl}\overline{\xi_k \xi_l}=L_{ik}L_{jk}=M^{-1}_{ij}\;.
\ee
Our approximate $M^{-1}$  calculated using our methods above is not precisely symmetric as translation invariance is not perfectly preserved for finite $N$. We can make it symmetric taking the average of $M^{-1}$ with its transpose. The precise symmetrization method does not affect the results at large $N$.

In Fig.~\ref{fig:reverse_check_1} we numerically compute the SFF for the three example classes of RMT studied above. In all cases, we compare results obtained from the exact distribution of eigenvalues of the matrix model and the tridiagonal version approximated by fixing the leading behavior of the one and two-point functions at large $N$. Remarkably, there is an extremely good match between the computations at all times.   Fig.~\ref{fig:reverse_check_2} shows a similar excellent match between the exact and approximate calculations of the spread complexity \eqref{eq:spread_comp_def}.

The same methods using an approximate tridiagonal Hamiltonian with the correct one- and two-point statistics can be used to compute the survival probability and spread complexity of the initial state $\ket{\psi}=(1,0,\cdots ,0)$ associated with the Lanczos coefficients computed in this paper and the corresponding Krylov basis.  However, we should expect this computation to deviate from the exact answer, because, as discussed above, our analytical formulae describe the bulk of the large-N matrix Hamiltonian, but not the ``edge'', i.e. $a_i$ and $b_i$ for $i$ of  $O(1)$.
This inaccuracy is washed out in the analysis of the Thermo-Field Double state that we described above. This is because we simply used the tridiagonal approximation as a method of generating a Hamiltonian, and then used the methods of \cite{SpreadC} to produce the SFF and spread complexity associated with the TFD.   This computation re-tridiagonalizes the Hamiltonian for the TFD state.  The associated Lanczos coefficients are then global linear combinations of the coefficients for which we derived an analytical formula.  As such, at large $N$, the ``edge'' of our Hamiltonian makes a negligible contribution to the time development of the TFD state, and the results of this section simply serve as a check on the accuracy of our results for the bulk of the Hamiltonian.  In a forthcoming paper, we will explain how to ``pad'' the edge of our analytical approximation to accurately describe the full-time evolution of the initial state $\ket{\psi}=(1,0,\cdots ,0)$, which, as mentioned above, can be considered as an initial random state.

\section{Discussion}

The Lanczos tridiagonalization method is an important tool for studying quantum theories. In the context of quantum evolution, it facilitates efficient computation of the time-evolved wavefunction. In the context of RMTs, it enables a more efficient generation of random matrix ensembles. The Lanczos algorithm also suggests a definition of the complexity of time evolution,  for both operators \cite{Parker:2018yvk} and states \cite{SpreadC}, that can be practically analyzed in a wide range of theories.

Here, we provided an analytical treatment of the statistics of the Lanczos coefficients, i.e., the components of the triadiagonalized Hamiltonian, for general Random Matrix Theories.
Mathematically speaking, our results generalize the seminal article of Dumitriu and Edelman \cite{Dumitriu_2002} concerning Gaussian beta ensembles to RMTs controlled by generic potentials, or, equivalently, having arbitrary densities of states. Specifically, we gave integral formulae for the one and two-point functions of the components of tridiagonalized Hamiltonians drawn from ensembles specified by arbitrary potentials in the large-N limit.  For polynomial potentials, we showed that the expected values and covariances of these Lanczos coefficients are determined by the solutions to certain algebraic equations. All our findings were verified numerically in a  variety of examples.

Our results are derived under the assumption that the Lanczos coefficients at large $N$ are continuous. While this holds for a wide variety of random matrix ensembles, particularly those that have a unimodal density of states, this assumption fails when the density of states is sharply multimodal such as when the spectrum is gapped. In such a case, \eqref{intdl} does not have a solution that satisfies the monotonicity assumption of section \ref{solve_intdl}, as the integrand spans an interval while the density of states is broken or is too thin in the middle of that interval to support the integrand at all $x$. This does not mean, however, that these cases are intractable; some numerical experimentation shows that rather than being approximately constant across small intervals, as we have assumed, the Lanczos coefficients in these cases become approximately \emph{periodic} across small intervals. A generalization of our methods to these periodic cases would be interesting and would tackle a wider range of problems.

We close with some remarks and future directions. Wigner envisioned random matrices as a means for understanding aspects of the spectra of heavy nuclei.  However, another prominent application of RMT concerns quantum chaos. In particular, a basic conjecture states that the statistics of random matrices approximates the fine-grained structure of the spectrum of a quantum chaotic Hamiltonian  \cite{osti_4801180,PhysRevLett.52.1} (see the reviews \cite{Guhr:1997ve,akemann2011oxford}). In this context, our new tools may assist in studying universal aspects of the wave-functions of many-body quantum chaotic systems at long-time scales. These tools may also allow the study of aspects of the quantum complexity of these systems  \cite{SpreadC}. 

For example, the authors of \cite{SpreadC} showed that the widely-studied spectral form factor of an RMT is in fact just one component of the occupancy distribution in the Krylov basis associated with the time evolution of the  Thermofield Double State \cite{Takahashi:1996zn}.\footnote{The TFD is the canonical purification of the thermal ensemble in quantum mechanical systems \cite{Takahashi:1996zn}. In the context of quantum gravity and the holographic correspondence, it describes eternal black holes \cite{Israel:1976ur,Maldacena_2003}.}  The time development of the spectral form factor is then governed by a Schrodinger equation (\ref{SchrodingerEq}) that encodes the long-time dynamics of the complete wave function in the Krylov basis. For RMTs, this Schrodinger equation is effectively a random one dimensional chain, where the hopping parameters are the Lanczos coefficients whose statistics we computed above. The study of this random Schrodinger equation may shed new light on quantum chaos. We will expand on this idea in a companion article \cite{usfuture}.

It would also be interesting to connect our ideas and framework to the recently considered effective field theory of quantum chaos, see \cite{Altland:2020ccq,Altland:2021rqn,Altland:2022xqx} and references therein, which also seeks to describe aspects of the long time dynamics of quantum chaotic systems, and the associated spectral structure. On a different note, as mentioned in \cite{SpreadC}, our approach may be a promising avenue for understanding the late time dynamics of black hole interiors, potentially connecting with \cite{Iliesiu:2021ari,Stanford:2022fdt}. Finally, it would be desirable to connect our ``physics'' approach to quantum state complexity to other approaches for studying complexity of time evolution in many body systems that have appeared recently \cite{doi:10.1126/science.1121541,Magan:2016ehs,Jefferson:2017sdb,Chapman:2017rqy,Caputa:2017urj,Magan:2018nmu,Caputa:2018kdj,Balasubramanian:2019wgd,Balasubramanian:2021mxo,Bueno:2019ajd,Erdmenger:2020sup,Chagnet:2021uvi,PRXQuantum.2.030316,Haferkamp:2021uxo,Koch:2021tvp,Czech:2022wtt,Rabambi:2022jwu}.




\section*{Acknowledgements}

We are grateful to Pawel Caputa for many discussions about spread complexity and the Lanczos approach, to Julian Sonner for conversations about quantum chaos, to Nicholas Witte for sharing his work on the thermodynamic limit of the Lanczos method, and to Roman Geiko for useful correspondence on properties of Toeplitz matrices. This work is supported by a DOE through DE-SC0013528,  QuantISED grant DE-SC0020360, and the Simons Foundation It From Qubit collaboration (385592).  VB thanks the Galileo Galileo Institute where he was a Simons Visiting Scientist, and the Aspen Center for Physics (supported by NSF grant PHY-1607611), for hospitality as this work was completed.

\appendix

\section{Recursion relations and generating functions}\label{appendix_math}

In this appendix, we prove some properties of the coefficients
\be
    C_{\alpha,\beta}^n \equiv \sum_{k=0}^{n} \binom{k}{(k+\alpha)/2}\binom{n-k}{(n-k+\beta)/2}\;,
\ee
that appeared in the Gaussian kernels derived in the main text. Our first objective is to prove the following relations for $\alpha, \beta\geq 0$
\bea
    & C^n_{\alpha, \beta}&=C^n_{0,\alpha+\beta}\equiv C^n_{\alpha+\beta}\;,\\
    & C^n_{\delta}&=C^{n-1}_{\delta-1}+C^{n-1}_{\delta+1}\;,\\
    & C^{2n}_{0}&= 4^n\;.
\eea
Note that flipping the sign of either $\alpha$ or $\beta$ in $C^n_{\alpha,\beta}$ does not change its value, due to the symmetry of the binomial coefficients.

To this end, we start by proving some properties of the generating functions of the binomial coefficients.

\begin{lem}\label{lem:lemma1}
    Let
    \be
        g_{\alpha}(x) = \sum_{n=0}^{\infty} \binom{n}{(n+\alpha)/2}\,x^n,
    \ee
    then we have the recursion relation
    \be
        x\,(g_{\alpha}(x)+g_{\alpha+2}(x))=g_{\alpha+1}(x)\;,
    \ee
    for $\alpha$ a non-negative integer.
\end{lem}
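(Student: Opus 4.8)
The plan is to prove the identity at the level of formal power series by comparing the coefficient of $x^n$ on the two sides, which reduces the entire claim to Pascal's rule $\binom{n}{k}=\binom{n-1}{k-1}+\binom{n-1}{k}$ together with careful bookkeeping of the convention that $\binom{n}{k}$ vanishes when $k$ is not a non-negative integer.

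First I would read off coefficients. On the left-hand side the coefficient of $x^n$ in $g_{\alpha+1}(x)$ is $\binom{n}{(n+\alpha+1)/2}$. On the right-hand side the prefactor $x$ shifts degree by one, so the coefficient of $x^n$ in $x\bigl(g_\alpha(x)+g_{\alpha+2}(x)\bigr)$ equals $\binom{n-1}{(n+\alpha-1)/2}+\binom{n-1}{(n+\alpha+1)/2}$. Writing $k=(n+\alpha+1)/2$, so that $(n+\alpha-1)/2=k-1$, the asserted equality of these coefficients is precisely $\binom{n}{k}=\binom{n-1}{k-1}+\binom{n-1}{k}$, i.e.\ Pascal's identity.

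The one point needing care — and the only real (though mild) obstacle — is the parity/integrality convention. When $n+\alpha$ is odd, $k=(n+\alpha+1)/2$ and $k-1$ are both integers and all three binomials are genuine, so Pascal's rule applies verbatim; the boundary values $k=0$ and $k=n$ are consistent with the convention (e.g.\ $\binom{n-1}{-1}=0$). When $n+\alpha$ is even, $(n+\alpha+1)/2$ is a half-integer, so the left coefficient is zero by convention, and likewise $(n+\alpha-1)/2$ and $(n+\alpha+1)/2$ are both half-integers on the right, so both terms there vanish; hence the identity reads $0=0$. For the base degree $n=0$ with $\alpha\ge 0$, the left coefficient $\binom{0}{(\alpha+1)/2}$ is zero and the right side has no $x^0$ term, so there is nothing to check there.

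Since the identity holds coefficientwise for every $n\ge 0$, it holds as an identity of formal power series, and hence also as an identity of analytic functions on the common disk of convergence. (As an optional cross-check one can use the closed form $g_\alpha(x)=\bigl(1-4x^2\bigr)^{-1/2}\bigl((1-\sqrt{1-4x^2})/(2x)\bigr)^{\alpha}$, under which the recursion collapses to the quadratic $x(1+r^2)=r$ satisfied by $r=(1-\sqrt{1-4x^2})/(2x)$; but this is not needed for the proof.)
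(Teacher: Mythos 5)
Your proof is correct and follows the same route as the paper, which simply cites Pascal's identity $\binom{n}{k}=\binom{n-1}{k}+\binom{n-1}{k-1}$; you have merely spelled out the coefficient extraction and the half-integer/parity bookkeeping that the paper leaves implicit.
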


\begin{proof}
    This follows from Pascal's Triangle relation
    \be 
        \binom{n}{k}=\binom{n-1}{k}+\binom{n-1}{k-1}\;.
    \ee
\end{proof}

Notice that the first two elements of the sequence are
\be
    g_0(x)=\frac{1}{\sqrt{1-4x^2}}\;,\,\,\,\,\,\,\,\,\,\,\,\,\,\,\, g_1(x)=\frac{1-\sqrt{1-4x^2}}{2x\sqrt{1-4x^2}}\;,
\ee
which can be obtained from the well-known generating function of the Catalan numbers.

\begin{lem}\label{lem:lemma2}
    Let
    \be
        g_{\alpha}(x) = \sum_{n=0}^{\infty} \binom{n}{(n+\alpha)/2}\, x^n,
    \ee
    where the binomial coefficients are assumed to be zero if the arguments are not integers. Then the following relation
    \be
        g_{\alpha}(x) \, g_{\beta}(x)=g_{\gamma}(x) \, g_{\delta}(x)\;,
    \ee
   holds whenever $\alpha, \beta, \gamma, \delta$ are non-negative integers satisfying $\alpha+\beta=\gamma+\delta$.
\end{lem}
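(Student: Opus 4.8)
The plan is to prove the stronger statement that $g_{\alpha+1}(x)=r(x)\,g_{\alpha}(x)$ for every non-negative integer $\alpha$, where $r(x)\equiv g_1(x)/g_0(x)=\bigl(1-\sqrt{1-4x^2}\bigr)/(2x)$ is a fixed power series. Since $1-\sqrt{1-4x^2}$ vanishes to order $x^2$, the series $r(x)=x+x^3+\cdots$ has no constant term, so $r$ — and hence each $g_\alpha$ — is a genuine element of the ring of formal power series in $x$, and the divisions by $x$ appearing below are legitimate. Granting this claim, iterating gives $g_\alpha(x)=r(x)^{\alpha}\,g_0(x)$, so $g_\alpha(x)\,g_\beta(x)=r(x)^{\alpha+\beta}\,g_0(x)^2$ depends on $\alpha,\beta$ only through their sum, which is exactly the assertion $g_\alpha g_\beta=g_\gamma g_\delta$ whenever $\alpha+\beta=\gamma+\delta$.

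First I would record the one algebraic fact needed about $r$: writing $u=\sqrt{1-4x^2}$ and using $u^2=1-4x^2$, a direct computation gives $x\,r(x)^2-r(x)+x=0$, equivalently $r(x)/x-1=r(x)^2$. (This is just the quadratic satisfied by the shifted Catalan generating function.)

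Then I would prove $g_{\alpha+1}=r\,g_\alpha$ by induction on $\alpha$. The base case $\alpha=0$ is the definition of $r$ together with the closed forms for $g_0,g_1$ recorded just above the lemma. For the inductive step, assume $g_{\alpha+1}=r\,g_\alpha$; Lemma~\ref{lem:lemma1} at index $\alpha$ reads $x\bigl(g_\alpha+g_{\alpha+2}\bigr)=g_{\alpha+1}$, so
\[
g_{\alpha+2}=\frac{g_{\alpha+1}}{x}-g_\alpha=\Bigl(\frac{r}{x}-1\Bigr)g_\alpha=r^2\,g_\alpha=r\,(r\,g_\alpha)=r\,g_{\alpha+1},
\]
where the third equality uses $r/x-1=r^2$ and the last uses the inductive hypothesis. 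This closes the induction and hence establishes the lemma.

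The argument is entirely elementary, so I do not expect a genuine obstacle; the only point requiring care is the bookkeeping already mentioned — that everything takes place in the ring of formal power series, so the apparent pole of $g_1$ and the division by $x$ in the inductive step cause no trouble. One could alternatively bypass the induction by quoting the known identity $\sum_{m\ge 0}\binom{2m+\alpha}{m}y^m=(1-4y)^{-1/2}\bigl((1-\sqrt{1-4y})/(2y)\bigr)^{\alpha}$ with $y=x^2$, but the inductive route keeps the proof self-contained, relying only on Lemma~\ref{lem:lemma1}.
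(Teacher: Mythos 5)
Your proof is correct, and it takes a genuinely different route from the paper's. The paper proves the lemma by strong induction on $\max(\alpha,\beta,\gamma,\delta)$: the base case requires an explicit closed-form verification that $g_2(x)\,g_0(x)=g_1(x)^2$, and the inductive step handles several degenerate cases before using Lemma~\ref{lem:lemma1} twice (together with two intermediate applications of the inductive hypothesis) to trade $g_\alpha g_\beta$ for $g_{\alpha-1}g_{\beta+1}$. You instead establish the stronger structural fact $g_{\alpha+1}=r\,g_\alpha$ with $r=g_1/g_0$, so that $g_\alpha=r^\alpha g_0$ and the product $g_\alpha g_\beta=r^{\alpha+\beta}g_0^2$ visibly depends only on $\alpha+\beta$. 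Your induction is on a single index, needs no case analysis, and replaces the paper's base-case computation with the single quadratic identity $x r^2-r+x=0$ (which is easily checked from $r=(1-\sqrt{1-4x^2})/(2x)$). Both arguments ultimately rest on the same ingredient, the Pascal recursion of Lemma~\ref{lem:lemma1}, but yours buys more: it delivers the closed form $g_\alpha=g_0\,r^\alpha$ as a byproduct, which in turn gives an explicit generating function for the coefficients $C^n_\delta$ used later. Your attention to the formal-power-series bookkeeping (that $r$ has no constant term and that $g_{\alpha+1}$ is divisible by $x$, the latter being immediate from Lemma~\ref{lem:lemma1} itself) is exactly the right level of care; there is no gap.
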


\begin{proof}
    We prove this by strong induction on $\max\,(\alpha, \beta, \gamma, \delta)$. Assume without loss of generality that $\max\,(\alpha, \beta, \gamma, \delta)=\alpha$.
    
    \emph{Base cases:} Consider $\alpha <3$ satisfying $\alpha+\beta=\gamma+\delta$. The only nontrivial case to show is
    \bea
        g_2(x)\,g_0(x)&=\gr{\frac{1-\sqrt{1-4x^2}}{2x^2\sqrt{1-4x^2}}}-\frac{1}{\sqrt{1-4x^2}}\frac{1}{\sqrt{1-4x^2}}\\
        &=\gr{\frac{1-2x^2-\sqrt{1-4x^2}}{2x^2\sqrt{1-4x^2}}}\frac{1}{\sqrt{1-4x^2}}\\
        &=\frac{2-4x^2-2\sqrt{1-4x^2}}{4x^2(1-4x^2)}=g_1^2(x).
    \eea
    
    \emph{Step:} Now we consider $\alpha\geq 3$.
    
    If $\alpha$ is equal to $\gamma$ or $\delta$, then $\beta$ is equal to $\delta$ or $\gamma$ respectively, and the relation holds trivially. If $\alpha$ is equal to $\beta$, then $\alpha=\beta=\gamma=\delta$ and the relation holds.
    If $\alpha=\beta+1$, then necessarily $\alpha=\gamma,\beta=\delta$ or $\alpha=\delta,\beta=\gamma$, and again the relation holds. The only remaining cases satisfy $\alpha>\beta+1, \gamma, \delta$.
    
    In this case, lemma \ref{lem:lemma1} and the fact that $\alpha\geq 3$ give us
    \be
        g_{\alpha}(x)\, g_{\beta}(x)=g_{\beta}(x)\,\gr{\frac{g_{\alpha-1}(x)}{x}-g_{\alpha-2}(x)}.
    \ee
    Using the inductive hypothesis, since $\alpha>\beta+1$, we have $g_{\alpha-1}(x) \, g_{\beta}(x)=g_{\alpha-2}(x) \, g_{\beta+1}(x)$, and $g_{\alpha-2}(x) \, g_{\beta}(x)=g_{\alpha-3}(x) \, g_{\beta+1}(x)$. We can apply lemma \ref{lem:lemma1} again to get
    \be
        =g_{\beta+1}(x)\,\gr{\frac{g_{\alpha-2}(x)}{x}-g_{\alpha-3}(x)}=g_{\alpha-1}(x) \, g_{\beta+1}(x).
    \ee
    Finally, since $\alpha>\beta+1, \gamma, \delta$, we have that $\max(\alpha-1,\beta+1, \gamma, \delta)<\alpha$, so using the inductive hypothesis again
    \be
        g_{\alpha}(x)\,g_{\beta}(x)=g_{\alpha-1}(x)\,g_{\beta+1}(x)=g_{\gamma}(x)\,g_{\delta}(x)\;.
    \ee
\end{proof}

Now we come back to the sequences that  appear directly in the Gaussian kernels, namely
\eqm{C_{\alpha, \beta}^n = \sum_{k=0}^{n} \binom{k}{(k+\alpha)/2}\binom{n-k}{(n-k+\beta)/2}
}
These are convolutions between the previous binomial sequences corresponding to $g_{\alpha},g_{\beta}$. Therefore the generating function of the sequence is
\be
\sum_{n=0}^\infty C_{\alpha, \beta}^n \, x^n
=g_{\alpha}(x)\, g_{\beta}(x)\;.
\ee
By lemma \ref{lem:lemma2}, this depends only on $\alpha+\beta$ rather than $\alpha$ and $\beta$ independently. We then define $C^n_\delta=C^n_{\alpha,\beta}$ for any $\alpha+\beta=\delta$. The next lemma establishes the properties of $C_m^n$ we wanted to prove.

\begin{lem}\label{lem:lemma3}
    $C^n_m$ satisfies 
    \bea
        C^n_0&=&\begin{cases}2^n & (n~\text{even})\\ 0 &(n~\text{odd})\end{cases}\nonumber\\
        C^n_1&=&\begin{cases}0 & (n~\text{even})\\ 2^n-\frac{1}{2}\binom{n+1}{(n+1)/2} &(n~\text{odd})\end{cases}\nonumber\\
        C^n_\delta+C^n_{\delta+2}&=&C^{n+1}_{\delta+1}\label{pascalconv_recursionrelations}
    \eea
\end{lem}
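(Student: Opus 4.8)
The plan is to do everything with generating functions, using the identity $\sum_{n\ge 0} C^n_\delta\, x^n = g_0(x)\,g_\delta(x)$ established just above (a consequence of Lemma~\ref{lem:lemma2} together with the fact that $C^n_{\alpha,\beta}$ is the convolution of the coefficient sequences of $g_\alpha$ and $g_\beta$), the closed forms $g_0(x) = (1-4x^2)^{-1/2}$ and $g_1(x) = \frac{1-\sqrt{1-4x^2}}{2x\sqrt{1-4x^2}}$ recorded above, and Lemma~\ref{lem:lemma1}. I would first dispatch $C^n_0$: squaring the generating function gives $\sum_n C^n_0\, x^n = g_0(x)^2 = (1-4x^2)^{-1} = \sum_{k\ge 0} 4^k x^{2k}$, and reading off the coefficient of $x^n$ yields $C^{2k}_0 = 4^k = 2^{2k}$ and $C^{2k+1}_0 = 0$, which is the first claim.

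Next I would establish the Pascal-type recursion $C^n_\delta + C^n_{\delta+2} = C^{n+1}_{\delta+1}$. By Lemma~\ref{lem:lemma1}, $g_\delta(x)+g_{\delta+2}(x) = g_{\delta+1}(x)/x$; multiplying both sides by $g_0(x)$ gives $\sum_n (C^n_\delta + C^n_{\delta+2})x^n = g_0(x)g_{\delta+1}(x)/x = \sum_n C^{n+1}_{\delta+1}x^n$. The division by $x$ is legitimate because $g_0 g_{\delta+1}$ has vanishing constant term for $\delta\ge 0$ (i.e.\ $C^0_{\delta+1}=0$), so no negative power of $x$ is produced; matching coefficients of $x^n$ gives the recursion.

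Finally, for $C^n_1$ I would compute $g_0(x)g_1(x) = \frac{1-\sqrt{1-4x^2}}{2x(1-4x^2)}$, split it as $\frac{1}{2x(1-4x^2)} - \frac{1}{2x\sqrt{1-4x^2}}$, and expand each piece using $(1-4x^2)^{-1}=\sum_k 4^k x^{2k}$ and $(1-4x^2)^{-1/2}=\sum_k \binom{2k}{k}x^{2k}$. Both pieces are odd power series in $x$, so the coefficient of $x^n$ vanishes for even $n$; for odd $n$, writing $n = 2k-1$, the first piece contributes $2^{2k-1} = 2^n$ and the second contributes $\frac12\binom{2k}{k} = \frac12\binom{n+1}{(n+1)/2}$, which yields $C^n_1 = 2^n - \frac12\binom{n+1}{(n+1)/2}$, the second claim.

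There is no conceptual obstacle here; the step requiring the most care is the parity bookkeeping when extracting the coefficient of $x^n$ from a series in $x^2$ divided by $x$, and checking the vanishing of the constant term that legitimizes the $x^{-1}$ shift in the recursion step. It would also be prudent to verify directly that the product $g_0 g_1$ simplifies to the claimed closed form before performing the split. All three properties then follow, completing the proof of Lemma~\ref{lem:lemma3}.
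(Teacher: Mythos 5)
Your proposal is correct and follows essentially the same route as the paper's proof: both use the generating-function identity $\sum_n C^n_\delta x^n = g_0(x)g_\delta(x)$, read off $C^n_0$ from $g_0^2=(1-4x^2)^{-1}$, split $g_0g_1$ into the same two pieces to get $C^n_1$, and derive the recursion from Lemma~\ref{lem:lemma1} (the paper multiplies by $x$ and extracts the coefficient of $x^{n+1}$ rather than dividing by $x$, which is the same bookkeeping). No gaps; your extra care about the cancelling $x^{-1}$ terms is, if anything, slightly more explicit than the paper.
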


\begin{proof}
    We use the generating function $f_\delta(x)=\sum C^n_\delta\, x^n = g_0(x) \, g_\delta(x)$. For $\delta=0$, $f_0(x)=g_0(x)^2=\frac{1}{1-4x^2}$, whose Taylor expansion is given by the first relation. For $\delta=1$, $f_1(x)=g_0(x) \, g_1(x)=\frac{1}{1-4x^2}\frac{1-\sqrt{1-4x^2}}{2x}=\frac{1}{2x(1-4x^2)}-\frac{1}{2x\sqrt{1-4x^2}}$. The first term gives the exponential and the second term gives the binomial in the second relation.
    
    To show the third relation, we use lemma \ref{lem:lemma1} in the following equation
    \eqm{x\,(f_{\delta}(x)+f_{\delta+2}(x))=g_0(x)\,x\,(g_{\delta}(x)+g_{\delta+2}(x))=g_0(x) \, g_{\delta+1}(x)=f_{\delta+1}(x).
    }
    Extracting the coefficient of $x^{n+1}$ in the above equation gives the third relation.
\end{proof}

Finally, to have an intuitive idea of the binomial sequence $C_{\alpha,\beta}^n$, we depict the first few values in Fig.~(\ref{fig:table_cmn}).

\begin{figure}[t]
\begin{tabular}{c|cccccccc}
\hspace*{1cm}& \hspace*{5mm}0\hspace*{5mm} & \hspace*{5mm}1\hspace*{5mm} & \hspace*{5mm}2\hspace*{5mm}  & \hspace*{5mm}3\hspace*{5mm}  & \hspace*{5mm}4\hspace*{5mm} & \hspace*{5mm}5\hspace*{5mm} & \hspace*{5mm}6\hspace*{5mm} & \hspace*{5mm}7\hspace*{5mm} \\\hline
0 & 1                     &                         &    &    &   &   &   &   \\
1   &                       & 1                       &    &    &   &   &   &   \\
2   & 4                     &                         & 1  &    &   &   &   &   \\
3   &                       & 5                       &    & 1  &   &   &   &   \\
4   & 16                    &                         & 6  &    & 1 &   &   &   \\
5   &                       & 22                      &    & 7  &   & 1 &   &   \\
6   & 64                    &                         & 29 &    & 8 &   & 1 &   \\
7   &  & 93 &    & 37 &   & 9 &   & 1 \\
\end{tabular}
\caption{Values of $C^n_m$ for various $m$ (columns) and $n$ (rows). Blank spots are zero. The first three columns appear in the online encyclopedia of integer sequences as OEIS A000302, A000346, A008549, see \cite{oeis}.}
\label{fig:table_cmn}
\end{figure}

Our next step is to find distributions $\eta_m(x)$ whose moments correspond to the given sequences, namely
\be 
\int dx\,x^n \,  \eta_m(x)=C^n_m\;.\label{eq:rho_definition}
\ee
It is possible to find $\eta_m$ analytically by first converting the above (ordinary) generating functions into exponential generating functions via a Borel transform, and then finding $\eta_m$ via a Fourier transform of such a Borel transform along the imaginary axis. But it is easier to analyze this directly with a recursion relation.

First, notice that
\be
\eta_0(x)=\frac{1}{2}\delta(x+2)+\frac{1}{2}\delta(x-2)\implies \int dx\, x^n \,  \eta_0(x)=C^n_0\;,
\ee
where $C^n_m$ was defined in \eqref{pascalconv_recursionrelations}.

In order to find $\eta_1(x)$, we remind that $\int_{-2}^2 x^n \frac{1}{\pi \sqrt{4-x^2}}dx=\binom{n}{n/2}$. Then we have
\be
\eta_1(x)=\frac{1}{2}\delta(x-2)-\frac{1}{2}\delta(x+2)-\frac{1}{2}\frac{x}{\pi \sqrt{4-x^2}}H(4-x^2)\implies \int dx\, x^n \eta_1(x)=C^n_1
\ee
where $H(x)$ is the Heaviside step function. The recursion relation in \eqref{pascalconv_recursionrelations} is satisfied if $\eta_m(x)+\eta_{m+2}(x)=x\,\eta_{m+1}$, as follows from
\be
    C^{n+1}_{m+1}=\int dx~ x^{n+1}\, \eta_{m+1}(x)=\int dx~ x^{n}\, (\eta_m(x)+\eta_{m+2}(x))=C^n_m+C^n_{m+2}\;.
\ee
To solve this linear recurrence relation, we can write
\be
    \eta_m(x) = \frac{1}{2}\delta(x-2)+(-1)^m\frac{1}{2}\delta(x+2)+\frac{1}{2}\frac{x P_m(x)}{\pi \sqrt{4-x^2}}H(4-x^2)\;, \label{eq:rhom_definition}
\ee
where $P_m$ satisfies $P_m = xP_{m-1}-P_{m-2}$. Using the ansatz $P_{m}=r^m$, the characteristic equation is $r^2-xr+1=0$, which gives the solutions $r_\pm = e^{\pm i\theta}$, with $\theta=\cos^{-1}(\frac{x}{2})$. Solving for $P_0=0$, $P_1=1$, this gives us 
\be
    P_m(x)=Ae^{-im\theta}+Be^{im\theta}=-\frac{\sin\gr{m\cos^{-1}\gr{\frac{x}{2}}}}{\sin\gr{\cos^{-1}\gr{\frac{x}{2}}}}\;. \label{eq:Pm_definition}
\ee
Lastly, due to the recurrence relations, we know that
\be
    \eta_{\abs{m-1}}(x) \, t^m+\eta_{\abs{m+1}}(x) \, t^m=x \, \eta_{\abs{m}}(x) \, t^m\;,
\ee
except at $m=0$, where $2\,\eta_{1}(x)=x\,\eta_{0}(x)-\frac{x}{\pi\sqrt{4-x^2}}$. Summing the two-sided generating function 
\be
    \eta(x,t)\equiv \sum_{m=-\infty}^\infty \eta_{\abs{m}}(x) \, t^m\;,
\ee
we find
\bea
    \gr{t+\frac{1}{t}} \, \eta(x,t)&=&x \, \eta(x,t)-\frac{x}{\pi\sqrt{4-x^2}}H(4-x^2) \nonumber \\
    \implies \eta(x,t)&=&-\frac{1}{\gr{t+\frac{1}{t}}-x}\frac{x}{\pi\sqrt{4-x^2}}H(4-x^2)\;.\label{eq:rho_sum}
\eea

\bibliographystyle{utphys}
\bibliography{main}

\end{document}